\newcommand{\srv}{\mathbf{s}}
\newcommand{\Pt}{\mathbf{P}}
\newcommand{\rate}{\bm{\mu}}
\newcommand{\qlen}[1]{\tilde{\mathbf{x}}_{#1}}
\newcommand{\qlens}[1]{\hat{\mathbf{x}}_{#1}}
\newcommand{\use}[1]{\hat{\mathbf{u}}_{#1}}
\newcommand{\qlenFnc}[1]{\mathbf{x}\left( #1 \right) }
\newcommand{\qlenFncS}[2]{\mathbf{x}_{#2}\left( #1 \right) }
\newcommand{\useFnc}[1]{\mathbf{u}\left( #1 \right) }
\newcommand{\id}{\mathbf{I}}
\newcommand{\node}[1]{\textbf{#1}}
\renewcommand{\vec}[1]{\ensuremath{\mathbf{#1}}}
\newtheorem{thm}{Theorem}[section]
\newtheorem*{thm*}{Theorem}
\providecommand{\acks}[1]{\parbox{\textwidth}{\textbf{\textit{Acknowledgements: }} #1}}
\newtheorem{example}{Example}
\begin{document}

\title{Learning Queuing Networks by\\Recurrent Neural Networks}
\chead{Learning Queuing Networks by Recurrent Neural Networks}
\def\keywordname{{\bfseries \emph{Keywords}}}%

\renewcommand{\headeright}{}
\renewcommand{\undertitle}{}
\author{
Giulio Garbi\\
IMT School for Advanced Studies Lucca\\
Piazza San Francesco, 19, \\
55100 Lucca, Italy\\
\texttt{giulio.garbi@imtlucca.it}
\And
Emilio Incerto\\
IMT School for Advanced Studies Lucca\\
Piazza San Francesco, 19, \\
55100 Lucca, Italy\\
\texttt{emilio.incerto@imtlucca.it}
\And
Mirco Tribastone\\
IMT School for Advanced Studies Lucca\\
Piazza San Francesco, 19, \\
55100 Lucca, Italy\\
\texttt{mirco.tribastone@imtlucca.it}
}

\maketitle

\begin{abstract}
It is well known that building analytical performance models in practice is difficult because it requires a considerable degree of proficiency in the underlying mathematics. In this paper, we propose a machine-learning approach to derive performance models from data. We focus on queuing networks, and crucially exploit a deterministic approximation of their average dynamics in terms of a compact system of ordinary differential equations. We encode these equations into a recurrent neural network whose weights can be directly related to model parameters. This allows for an interpretable structure of the neural network, which can be trained from system measurements to yield a white-box parameterized model that can be used for prediction purposes such as what-if analyses and capacity planning. Using synthetic models as well as a real case study of a load-balancing system, we show the effectiveness of our technique in yielding models with high predictive power.             
\end{abstract}

\keywords{software performance \and queuing networks \and recurrent neural networks}


\section{Introduction}

\paragraph*{Motivation} Performance metrics such as throughput and response time are important factors that impact on the quality of a software system as perceived by users. They indicate how well the software behaves, thus complementing functional properties that concern what the software does. 
A traditional way of reasoning about the performance in a software system is by means of profiling. A tool such as \textsl{Gprof} executes the program and allows the identification of the program locations that are most performance sensitive~\cite{gprof}. The main limitation is that this information is valid for the specific run with which the program is exercised; different inputs lead to different performance profiles in general. Thus, while profiling can detect the presence of performance anomalies, it lacks generalizing and predictive power (see also~\cite{Zaparanuks:2012:AP:2254064.2254074}). 

As with all scientific and engineering disciplines, predictions can be made with models. Software performance models are mathematical abstractions whose analysis provides quantitative insights into real systems under consideration~\cite{DBLP:books/daglib/0027475}. 
Typically, these are stochastic models based on Markov chains and other higher-level formalisms such as queueing networks, stochastic process algebra, and stochastic Petri nets (see, e.g.,~\cite{DBLP:books/daglib/0027475} for a detailed account). Although they have proved effective in describing and predicting the performance behavior of complex software systems (e.g.,~\cite{bolch,DBLP:conf/sfm/Stewart07}),  a pressing limitation is that the current state of the art hinges on considerable craftsmanship to distill the appropriate abstraction level from a concrete software system, and relevant mathematical skills to develop, analyze, and validate the model. Indeed, the amount of knowledge required in both the problem domain and in the modeling techniques necessarily hinders their use in practice~\cite{woodside2007future}. 

Despite the promises that analytical performance modeling holds,  we are confronted with a high adoption barrier. A possible solution might be to derive the model \emph{automatically}. There has been much research into extending higher-level descriptions such as UML diagrams with performance annotations (using for example appropriate profiles such as MARTE~\cite{marte-spec}) from which both software artifacts and associated performance models are generated (see the surveys~\cite{DBLP:journals/tse/BalsamoMIS04,Koziolek2010634}). However, since systems are typically subjected to further modifications, the hard problem of keeping the model synchronized with the code arises~\cite{garcia2013obtaining}. This makes such model-driven approaches particularly difficult to use in general, especially in the context of fast-paced software processes characterized by continuous integration and development. 

\paragraph*{Main contribution} In this paper we propose a novel methodology where analytical performance models are automatically learned from a running system using execution traces. We focus on queueing networks (QNs), a formalism that has enjoyed considerable attention in the software performance engineering community, since it has been shown to be able to capture main performance-related phenomena in software architectures~\cite{tse:archopt}, annotated UML diagrams~\cite{DBLP:conf/wosp/BalsamoM05}, component-based systems~\cite{Koziolek2010634}, web services~\cite{1310688}, and adaptive systems~\cite{DBLP:conf/qosa/ArcelliCFL15,ase17}.

A QN is characterized by a number of parameters that define the following quantities: i) the behavior of each shared resource, such as its service demand and the concurrency level, which describe the amount of time that a client spends at the resource and the number of independent entities that can provide the service (e.g., number of threads in the pool or number of CPU cores), respectively; ii) the behavior of clients in terms of their operational profile, i.e., how they traverse the resources.      

Some of these parameters can be assumed to be known. For instance, the number of CPU cores is available from the hardware specification (or from the virtual-machine settings in a virtualized environment); the number of worker threads is a configuration parameter in most servers. Other parameters are more difficult to identify: the service demands, which depend on the execution behavior of the program that requests access to a shared resource; and the \emph{routing matrix}, which defines how clients (probabilistically) move between queuing stations. 

In our approach, the input is the set of shared resources and their concurrency level. The objective is to discover the QN model, i.e., the topology of the network and the service demands. 

Obviously, the problem of learning a mathematical model from data is not new. In the specific case of identifying parameters of a QN, a substantial amount of research gone into the problem of estimating service demands only (\cite{spinner2015evaluating}, see Section~\ref{sec:rel:qn} for a more detailed account of related work). Instead, we are not aware of approaches that deal with the estimation of both the service demands and the topology. This setting is a rather difficult one from a mathematical viewpoint because, as will be formalized later, routing probabilities and service demands appear as multiplicative factors in the dynamical equations that describe the evolution of a QN~\cite{bolch}. Since learning a QN can be understood as fitting the parameters to match these equations by some form of optimization, using both routing probabilities and service demands as decision variables will induce a \emph{nonlinear problem}, which is very difficult to handle in general. An additional problem to nonlinearity is that of scalability. This is due to the issue that the exact dynamical equations of a QN incur the well-known state explosion problem, because the number of discrete states to keep track of grows combinatorially with the number of clients and queuing stations.       

\paragraph{Learning method: recurrent neural networks.} To cope with both issues, we propose a learning method based on recurrent neural networks (RNNs) because of their ability to fitting nonlinear systems~\cite{machine_learning}. In particular, we develop a new architecture of the RNN which encodes the QN dynamics in an \emph{interpretable fashion}, i.e., by associating the weights of the RNN with QN parameters such as concurrency levels, routing probabilities, and service rates. A key instrument is the use of a compact system of \emph{approximate} (but still nonlinear) equations of the QN dynamics instead of the combinatorially large, but exact, original system of equations. Such approximation---called \emph{fluid} or \emph{mean-field}---consists in only one ordinary differential equation (ODE) for each station. It describes the time evolution of the \emph{queue length}, i.e., the number of clients contending for that resource. In practice, the fluid approximation provides an estimate of the average queue length of the underlying stochastic process. The QN approximation procedure is based on a fundamental result by Kurtz~\cite{kurtz-1970} and is well-known in the literature, e.g.,~\cite{Bortolussi2013317}. In the field of software performance, it has been used for the analysis of variability-intensive software systems~\cite{fase14,ase15} and for model-based runtime software adaption using online optimization~\cite{ase17} or satisfiability modulo theory approach~\cite{seams16}. This formulation has also been recently adopted for learning, but for service demands only~\cite{mascots18}, thus casting the problem into a (considerably) simpler quadratic programming one. 

The connection between RNN and ODEs is not new in the literature. In~\cite{pearlmutter1989learning}, authors have shown that recurrent neural networks can be thought of as a discretization of the continuous dynamical systems while, in~\cite{chen2018neural} a specialized training algorithm for ODEs has been recently proposed. However, despite the proliferation of works along this research direction, still there is no clear understanding of how to employ such artificial intelligence/machine learning techniques for supporting performance engineering tasks 
such as modeling, estimation, and optimization~\cite{icpe}.

\noindent\,\,\fbox{%
	\parbox{0.95\linewidth}{%
		The main technical contribution of this paper is to show that there is a direct association between the structure of the QN fluid approximation and standard activation functions and layers of an RNN. To the best of our knowledge, this is the first approach that formally unifies the expressiveness of analytical performance models with the learning capability of machine learning, contributing to positively answering the question whether ``\emph{AI will be at the core of performance engineering}''~\cite{icpe}.
	}%
}

The RNN is trained using time series of measured queue lengths at each service station. Its learned weights can be interpreted back as a QN with learned parameters, which can be used for predictive purposes. It is worth remarking that, in principle, one could learn a QN model by relying on a standard, \emph{black-box} RNN architecture by treating all the QN parameters (i.e., initial population, service demand, number of servers and routing probabilities) as input features of the learning algorithm. Unfortunately, this straightforward approach would require a considerable amount of input traces since the learning algorithm could not exploit the structural information about the problem. For instance, it would not be possible to do accurate what-if analyses by varying the value of a parameter if the network had not been trained with some input configurations where few variations of that parameter are considered. Moreover, in such setting, it would even be unclear which weights must be altered and how to reflect the changes into the model. 

Instead, here we report on the effectiveness and the generalizing power of our method by considering both synthetic benchmarks on randomly generated QNs, as well as a real web application deployed according to the load balancing architectural style. In both cases, we evaluate the degree of predictive power of the learned model in matching the transient as well as steady-state dynamics of unseen configurations (i.e., by varying the system workload, number of servers, and routing probabilities), reporting prediction errors less than 10\% across a validation set of 2000 instances.  

\paragraph*{Paper organization.} We provide some background about QNs in Section~\ref{sec:background}. The learning methodology is presented in Section~\ref{sec:learning}, which discusses how to encode a time-discretized version of the fluid approximation into an RNN where the weights represent the model parameters to identify. Section~\ref{sec:evaluation} presents the numerical evaluation on both the synthetic benchmarks and the real case study, providing implementation details on the RNN and on the used benchmark application. Section~\ref{sec:related} discusses further related work. Section~\ref{sec:conclusion} concludes.

\section{Background}\label{sec:background}

To make the paper self-contained, we present some background on QNs with the objective of motivating the fluid approximation as a deterministic estimator of average queue lengths, which will be used for the RNN encoding.
\subsection{Queuing Networks}
We assume \emph{closed} QNs, where clients keep circulating between queuing stations. A closed QN is formally defined by the following:
\begin{itemize}
    \item $N$: the number of clients in the network; 
    \item $M$: the number of queuing stations;
    \item $\srv = (s_1, \ldots, s_M)$: the vector of concurrency levels, where $s_i$ gives the number of independent servers at station $i$, with $1 \leq i \leq M$; 
    \item $\rate = (\mu_1, \ldots, \mu_M)$: the vector of service rates, i.e., $1/\mu_i>0$ is the mean service demand at station $i$, with $1 \leq i \leq M$;
    \item $\Pt = (\Pt_{i,j})_{1 \leq i,j \leq M}$: the routing probability matrix, where each element $\Pt_{i,j} \geq 0$ gives the probability that a client goes to station $j$ upon completion at station $i$;
    \item $\qlenFnc{0} = \left(\qlenFncS{0}{1}, \ldots, \qlenFncS{0}{M}\right)$: the \emph{initial condition}, i.e., $\qlenFncS{0}{i}$ is the number of clients at station $i$ at time 0. 
\end{itemize}
In a closed QN, the routing probability matrix is  \emph{stochastic} matrix, meaning that the sum across each row sums up to one.

\begin{figure}
\centering
\includegraphics[scale=0.40]{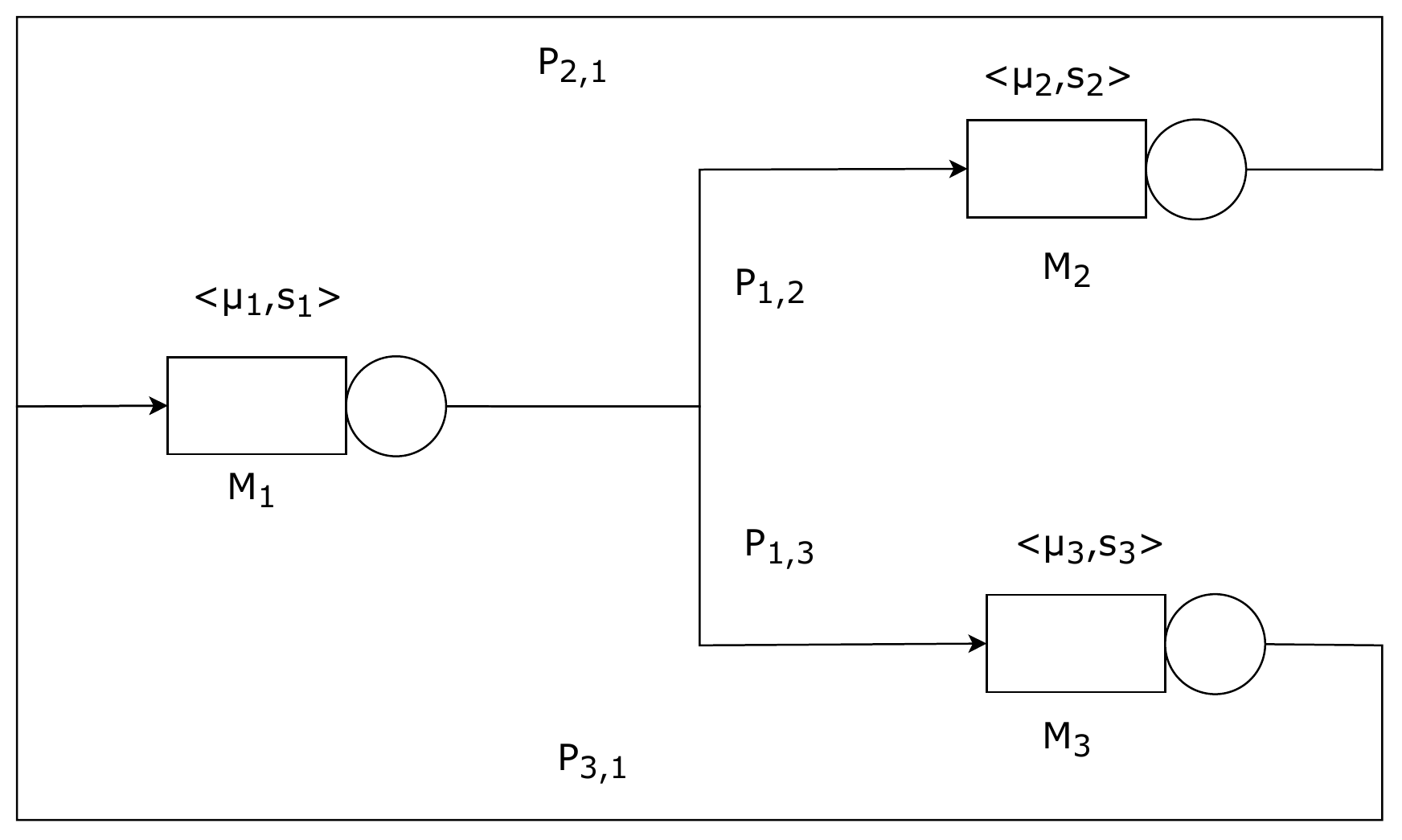}
\caption{Load balancing example}\label{fig:lb}
\end{figure}
\begin{example}
In the remainder of this section we use the QN in Fig.~\ref{fig:lb} as a running example. Depicted using the customary graphical representation, it represents a simple load-balancing system with $M=3$ stations. Requests from reference station \emph{\node{1}} are routed to two compute server stations \emph{\node{2}} and \emph{\node{3}} with probabilities $\Pt_{1,2}$ and $\Pt_{1,3}$, respectively. Upon service, a client returns back to station \emph{\node{1}}. An instantiation of this abstract model is discussed in Section~\ref{sec:evaluation}.  \qed
\end{example}   

\paragraph*{Markov chain semantics} The stochastic behavior of a QN is represented by a continuous-time Markov chain (CTMC) that tracks the probability of the QN having a given configuration of the queue lengths at each station. Informally, the CTMC is constructed as follows. A discrete CTMC state is a vector of queue lengths $\vec{X} = (X_1, \ldots, X_M)$. At each station $i$, if the number of clients $X_i$ is less than or equal to the number of servers $s_i$, then these proceed in parallel, each at rate $\mu_i$. Instead, if $X_i > s_i$ the number of clients that are queueing for service is $X_i - s_i$. When one client is serviced at station $i$, with probability $p_{ij}$ it goes to station $j$ to receive further service. This can be formalized by considering the well-known model of Markov population processes, whereby the CTMC transitions are described by jump vectors and associated transition functions from a generic state $\vec{X}$~\cite{Bortolussi2013317}. 

We define the jump vectors $h^{(ij)}$ to be the state updates due to clients moving to station $j$ upon service at $i$, and $q(\vec{X}, \vec{X} + h^{(ij)})$ the transition rate from state $\vec{X}$ to state $\vec{X} + h^{(ij)}$, where
$$\vec{X} + h^{(ij)} = (X_1, \ldots, X_i - 1, \ldots, X_j + 1, \ldots, X_M).$$
In other words, with the jump vector $h^{(ij)}$ we have that the number of clients at station $i$ is decreased by one, and, correspondingly, the number of clients at station $j$ is increased by one. 
Then, the CTMC is defined by:
\begin{equation}\label{eq:q}
q(\vec{X}, \vec{X} + h^{(ij)}) = \Pt_{i,j} \mu_i \min (X_i, s_i), \quad i,j = 1, \ldots, M.
\end{equation}

\begin{example}
In our running example, we have the jump vectors
\begin{align*}
h^{(12)} & = (-1, +1, 0) & h^{(13)} & = (-1, 0, +1) \\
h^{(21)} & = (+1, -1, 0) & h^{(31)} & = (+1, 0, -1) 
\end{align*} 
where the first row describes the updates due to a client being assigned to each compute server and the second row defines the client returning to the load balancer after service. For completeness we give the corresponding transitions:
\begin{align*}
q(\vec{X}, \vec{X} + h^{(12)}) & = \Pt_{1,2} \mu_1 \min (X_1, s_1) \\
q(\vec{X}, \vec{X} + h^{(13)}) & = \Pt_{1,3} \mu_1 \min (X_1, s_1) \\
q(\vec{X}, \vec{X} + h^{(21)}) & = \Pt_{2,1} \mu_2 \min (X_2, s_2) \\
q(\vec{X}, \vec{X} + h^{(31)}) & = \Pt_{3,1} \mu_3 \min (X_3, s_3) 
\end{align*}\qed
\end{example}

It is well known that a CTMC is completely characterized by the transitions (\ref{eq:q}) together with the initial condition $\qlenFnc{0}$. This formulation in terms of jump vectors allows for the efficient stochastic simulation of CTMCs~\cite{doi:10.1146/annurev.physchem}; indeed, we will use this technique to generate sample paths for the evaluation of our learning method on synthetic benchmarks in Section~\ref{sec:evaluation}. For our purposes, the main limitation of this CTMC representation is that the exact equations to analyze the probability distribution grow combinatorially with the number of clients and stations, as one needs to keep track of each possible discrete configuration of the queue lengths.

\subsection{Fluid Approximation}
The fluid approximation of a QN consists is an ODE system whose size is equal to the number of stations $M$, independently from the number of clients in the system. Informally, the ODE system can be built by considering the average impact that each transition has on the queue length at each station $k$. This is obtained by multiplying the $k$-th coordinate of each jump vector, $h^{(ij)}_k$, by the  function associated with the corresponding transition rate $q(\vec{X}, \vec{X} + h^{(ij)})$. Denoting by $\vec{x} = (\vec{x}_1, \ldots, \vec{x}_M)$ the variables of the fluid approximation, the ODE system is given by:
\begin{equation}\label{eq:odeInt}
\frac{d\qlenFncS{t}{k}}{dt} = \sum_{h^{(ij)}} h^{(ij)}_k q(\vec{x}(t), \vec{x}(t) + h^{(ij)}), \ k = 1, \ldots, M.
\end{equation}    

The solution for each coordinate, $\qlenFncS{t}{k}$, can be interpreted as an approximation of the average queue length at time $t$ as given by the CTMC semantics~\cite{Bortolussi2013317}. The theorems in~\cite{kurtz-1970} provide a result of asymptotic exactness of the fluid approximation, in the sense that the ODE solution and the expectation of the stochastic process become indistinguishable when the number of clients and servers is large enough.

Using~(\ref{eq:q}), the equations can be written as follows:
\begin{equation}\label{eq:ode}
\frac{d\qlenFncS{t}{k}}{dt} 
= \sum_{i≠k} \Pt_{i,k}\rate_i \min(\qlenFncS{t}{i}, \srv_i) +(\Pt_{k,k}-1)\rate_k \min(\qlenFncS{t}{k}, \srv_k)
\end{equation}  
where we have singled out the rates due to self loops $\Pt_{k,k}$.  
\begin{example}
The fluid approximation for the load balancer is:
\begin{align*}
\frac{d\qlenFncS{t}{1}}{dt} & = - \rate_1 \min(\qlenFncS{t}{1}, \srv_1) + \rate_2\min(\qlenFncS{t}{2}, \srv_2) \mathop{+} \\
& \qquad \ \mathop{+} \rate_3\min(\qlenFncS{t}{3}, \srv_3) \\  
\frac{d\qlenFncS{t}{2}}{dt} & = - \rate_2 \min(\qlenFncS{t}{2}, \srv_2) + \Pt_{1,2}\rate_1\min(\qlenFncS{t}{1}, \srv_1) \\
\frac{d\qlenFncS{t}{3}}{dt} & = - \rate_3 \min(\qlenFncS{t}{3}, \srv_3) + \Pt_{1,3}\rate_1\min(\qlenFncS{t}{1}, \srv_1)
\end{align*}\qed
\end{example}
Based on the solution to Eq.~\ref{eq:ode}, which directly provides queue-length estimates, one can derive other important performance metrics such as throughput, utilization, and response time. See, for instance~\cite{DBLP:journals/tse/TribastoneGH12,10.1109/TSE.2011.81} for a study of these results in a process algebra~\cite{pepa}, and~\cite{LNCS64160051,pepalqn,10.1109/TSE.2012.66} for  applications to layered queueing networks~\cite{10.1109/TSE.2008.74}.

In the remainder of this paper, we shall focus on QNs that do not have \emph{self loops} (i.e., a client served at a queue cannot re-enter the same queue immediately), i.e., $\Pt_{i,i} = 0$ for  $1 \leq i \leq M$. This is because we can show that, in the fluid approximation, for each $k$, $\Pt_{k,k}$ can be chosen freely as long as we adjust each $\Pt_{k,i}$ 
with $i \neq k$ and $\rate_k$. More formally, we can prove the following theorem.
\begin{thm}
\label{not_univoque}
For each $\pi \in [0,1)^{M}$, stochastic matrix $\Pt$ and $\rate\!≥\!0$ where \eqref{eq:ode} holds,
there exist $\hat{\Pt}$ and $\hat{\rate}$ such as for each $k$:
\begin{enumerate}[label=(\alph*)]
\item $\frac{d\qlenFncS{t}{k}}{dt} = \sum_{i≠k} \hat{\Pt}_{i,k}\hat{\rate}_i$ $\min(\qlenFncS{t}{i}, \srv_i) $\\$\hphantom{spspspspspspspsp} +(\hat{\Pt}_{k,k}-1)\hat{\rate}_k \min(\qlenFncS{t}{k}, \srv_k)$; 
\item $\hat{\Pt}_{k,k} = \pi_k$; 
\item $\sum_i\hat{\Pt}_{k,i}=1$; 
\item $\forall i \ \hat{\Pt}_{k,i}≥0$; 
\item $\hat{\rate}_k ≥ 0$.
\end{enumerate}
\end{thm}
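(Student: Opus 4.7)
My first step is to identify which combinations of the parameters $(\Pt, \rate)$ are actually pinned down by the ODE \eqref{eq:ode}. Inspecting the right-hand side, the coefficient of $\min(\qlenFncS{t}{i}, \srv_i)$ in the equation for station $k$ is $\Pt_{i,k}\rate_i$ when $i \neq k$, and the coefficient of $\min(\qlenFncS{t}{k}, \srv_k)$ in that same equation is $(\Pt_{k,k}-1)\rate_k$. Hence the right-hand side depends on $(\Pt, \rate)$ only through the aggregated products $\{\Pt_{i,k}\rate_i : i \neq k\}$ and $\{(1-\Pt_{k,k})\rate_k : k = 1, \ldots, M\}$. This tells me that the diagonal $\Pt_{k,k}$ may be chosen freely, provided $\rate_k$ and the off-diagonal row entries are rescaled so that these products stay invariant.

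Given the invariance, my explicit construction, for every row $k$ with $\Pt_{k,k} < 1$, is
\[
\hat{\rate}_k \;=\; \frac{(1-\Pt_{k,k})\,\rate_k}{1-\pi_k}, \qquad \hat{\Pt}_{k,k} \;=\; \pi_k, \qquad \hat{\Pt}_{k,j} \;=\; \frac{\Pt_{k,j}\,(1-\pi_k)}{1-\Pt_{k,k}} \ \text{for } j \neq k .
\]
The division by $1-\pi_k$ is legitimate since $\pi_k < 1$ by hypothesis. I would then verify the five properties in turn: (a) the products $(1-\hat{\Pt}_{k,k})\hat{\rate}_k$ and $\hat{\Pt}_{i,k}\hat{\rate}_i$ equal their un-hatted counterparts by direct cancellation, so by the invariance observation the ODE is unchanged; (b) holds by construction; (c) is the one-line check $\sum_{j}\hat{\Pt}_{k,j} = \pi_k + (1-\pi_k)\sum_{j\neq k}\Pt_{k,j}/(1-\Pt_{k,k}) = \pi_k + (1-\pi_k) = 1$ using the stochasticity of $\Pt$; (d) and (e) follow from the non-negativity of the original quantities together with $1-\pi_k > 0$ and $1-\Pt_{k,k} > 0$.

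The main obstacle — though a modest one — is the degenerate case $\Pt_{k,k} = 1$, where the formulas above are undefined. I would handle this separately: stochasticity of $\Pt$ then forces $\Pt_{k,j} = 0$ for all $j \neq k$, and simultaneously $(1-\Pt_{k,k})\rate_k = 0$, so row $k$ contributes nothing to the right-hand side of \eqref{eq:ode}. I would then simply set $\hat{\rate}_k = 0$, $\hat{\Pt}_{k,k} = \pi_k$, and distribute the residual mass $1-\pi_k$ arbitrarily and non-negatively across the off-diagonal entries of row $k$ (for instance, placing it all on a single $\hat{\Pt}_{k,j_0}$ for some fixed $j_0 \neq k$, which exists because the statement is vacuous for $M=1$: no stochastic $1\times 1$ matrix can have diagonal entry $\pi_1 < 1$). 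The case $\rate_k = 0$ with $\Pt_{k,k} < 1$ needs no special treatment: the main formula already produces $\hat{\rate}_k = 0$ with the off-diagonal row entries well-defined. Properties (a)--(e) again hold immediately. The substantive content of the proof is thus the invariance observation of the first paragraph; everything else is bookkeeping.
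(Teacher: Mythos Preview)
Your proposal is correct and follows essentially the same route as the paper's proof: the same rescaling $\hat{\rate}_k = \frac{1-\Pt_{k,k}}{1-\pi_k}\rate_k$ and $\hat{\Pt}_{k,j} = \frac{1-\pi_k}{1-\Pt_{k,k}}\Pt_{k,j}$ in the generic case, and the same fix $\hat{\rate}_k = 0$ in the degenerate case $\Pt_{k,k}=1$. The only cosmetic difference is that the paper spreads the residual mass $1-\pi_k$ uniformly as $\frac{1-\pi_k}{M-1}$ over the off-diagonal entries, whereas you concentrate it on a single coordinate; both choices work since $\hat{\rate}_k=0$ kills row $k$'s contribution anyway. (One terminological quibble: for $M=1$ the hypotheses are satisfiable but conditions (b) and (c) are jointly impossible, so the theorem is not vacuous there but rather implicitly assumes $M\geq 2$---the paper's own formula also divides by $M-1$.)
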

\begin{proof}
Available in Appendix~\ref{app:not_univoque_proof}.
\end{proof}
Thus, using the fluid approximation, for each network with self loops there is another one without them which cannot be distinguished. To identify a specific network among them, we need to know the self-loop values.

\section{Learning methodology}\label{sec:learning}

As apparent in both Equations (\ref{eq:q}) and (\ref{eq:ode}), a QN features routing probabilities and service demands as multiplicative factors in the defining dynamical equations. If we wish to learn a QN by assuming that both quantities are unknown, we are faced with a nonlinear (i.e., polynomial) optimization problem. Here we propose an RNN in order to estimate these parameters. 
We develop an RNN architecture which encodes the QN dynamics in an \emph{interpretable fashion}, i.e., by associating the weights of the RNN with QN parameters such as concurrency levels, routing probabilities, and service rates. 

\subsection{ODE Discretization}
We first obtain a time-discrete representation of the fluid approximation such that each time step is associated with a layer of the RNN. In matrix notation, for an arbitrary QN the fluid approximation  is given by:
\[
	\frac{d \qlenFnc{t}}{dt} = - \rate \min\left(\qlenFnc{t}, \srv \right)  + \Pt^T \rate \min(\qlenFnc{t}, \srv) 
\]
where $\qlenFnc{t}$ is the $M$-dimensional vector of queue lengths at time $t$. 
We consider a finite-step approximation of the above ODE for a small $\Delta t$, obtaining: 
\[
\qlenFnc{t+\Delta t}  = \qlenFnc{t} + \Delta t \cdot \big(- \rate \min\left(\qlenFnc{t}, \srv \right) + \rate\Pt \min(\qlenFnc{t}, \srv)\big)
\]
Finally, this  can be rewritten as
\begin{equation}\label{eq:finite}
\qlenFnc{t+\Delta t}  = \qlenFnc{t} + \Delta t \cdot \useFnc{t} \cdot \left(\rate \odot \left(\Pt - \id\right)\right)
\end{equation}
where $\useFnc{t} = \min\left(\qlenFnc{t}, \srv \right)$, $\id$ is the identity matrix of appropriate dimension, and $\odot$ is the operator where if $\mathbf{C} = \mathbf{a} \odot \mathbf{B}$, then $\mathbf{C}_{i,j} = \mathbf{a}_{i} \cdot \mathbf{B}_{i,j}$. 

\begin{figure}[t]
	\centering
	\includegraphics[width=0.5\linewidth]{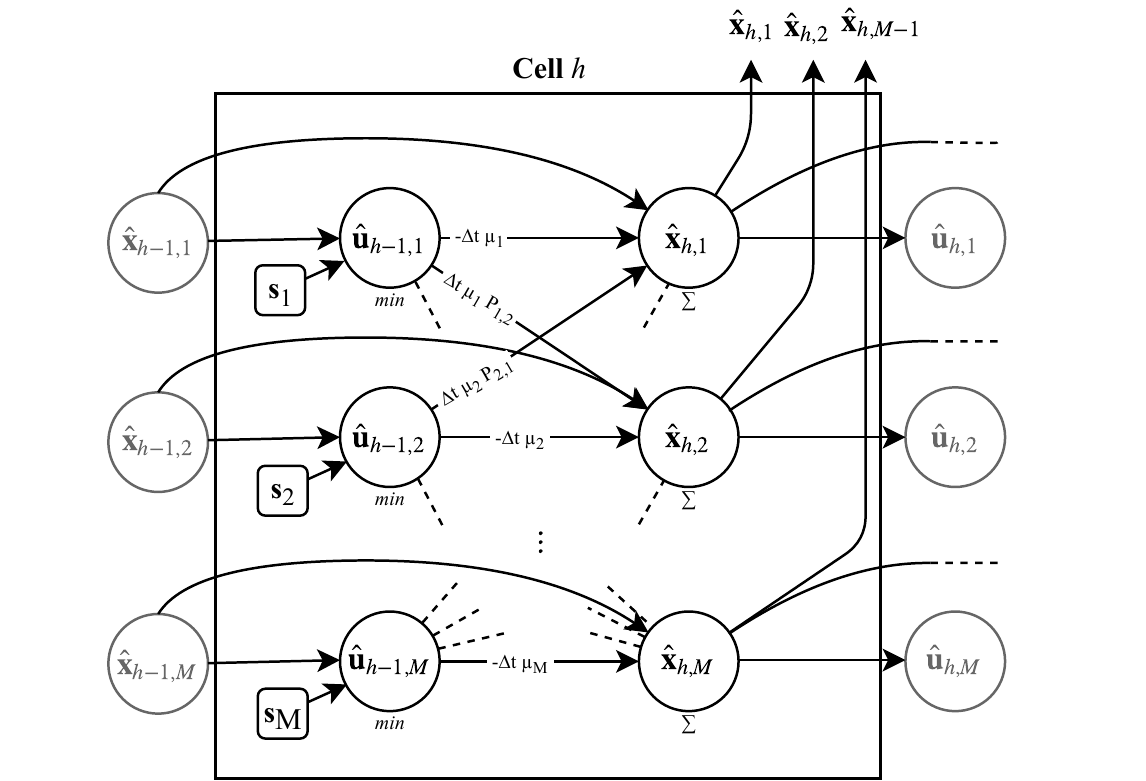}
	\caption{RNN encoding}
	\label{fig:encoding}
\end{figure}
\subsection{RNN Encoding}
The discretization (\ref{eq:finite}) of the fluid approximation of the QN admits a direct encoding as an RNN. It consists of an $M$-dimensional input layer $\qlens{0}$ that corresponds to the initial condition of the QN. The RNN has $H-1$ cells, with the $h$-th cell computing the estimate of the queue length at time $h \Delta t$, denoted by $\qlens{h}$ (see Fig.~\ref{fig:encoding}). That is, the $h$-th cell computes the quantity $\qlens{h} = \qlens{h-1} + \Delta t \cdot \use{h-1} \cdot \left(\rate \odot \left(\Pt - \id\right)\right)$, where, according to (\ref{eq:finite}), $\use{h-1}$ estimates $\useFnc{(h-1)\Delta t}$ as 
$\use{h-1} = \min\left(\srv, \qlens{h-1}\right)$. 

\begin{figure}[!ht]
	\centering
	\begin{subfigure}{0.45\linewidth}
		\centering
		\includegraphics[width=\linewidth]{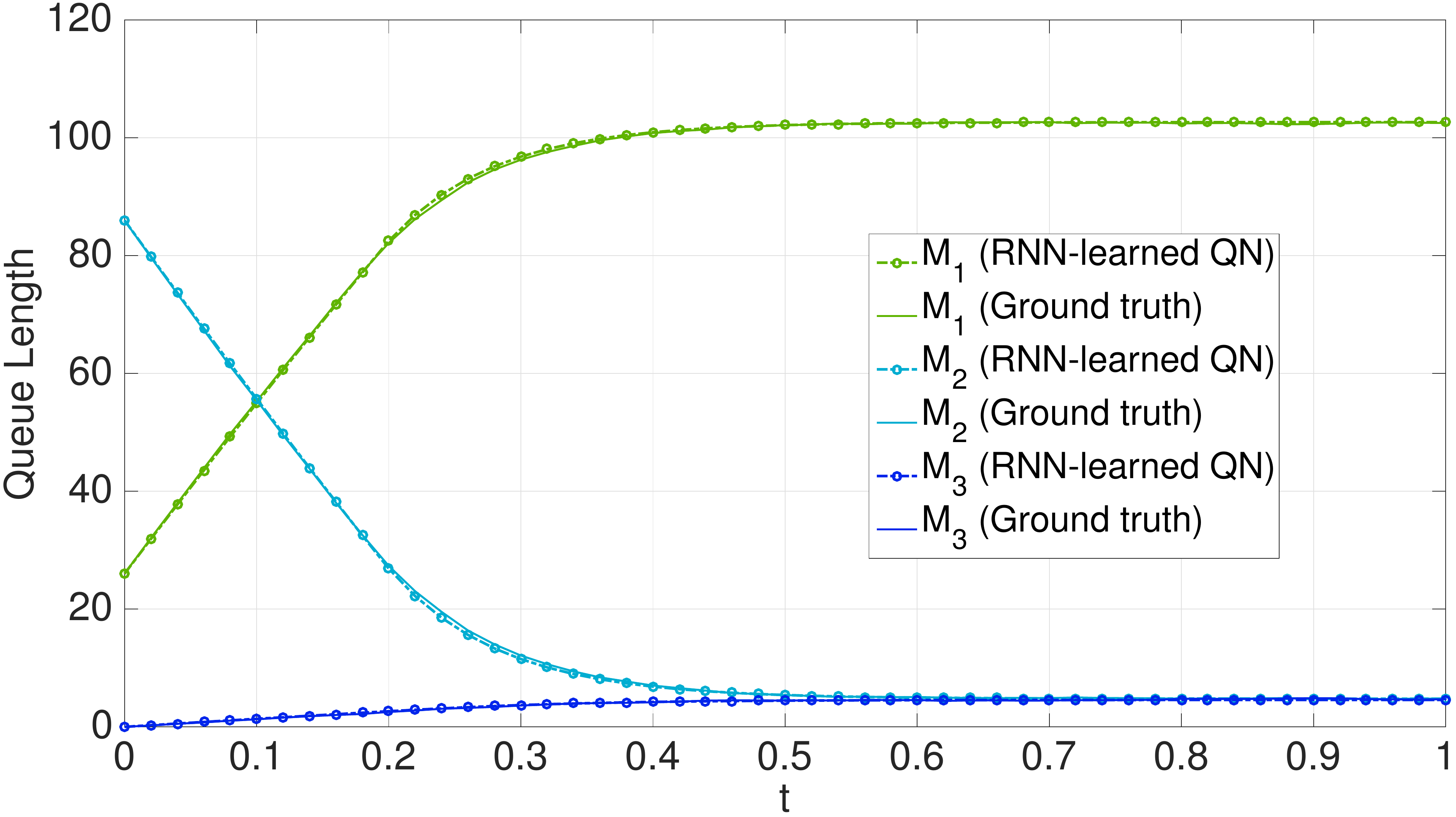}
		\caption{}
		\label{fig:learning_illustrative}
	\end{subfigure}
	\begin{subfigure}{0.45\linewidth}
		\centering
		\includegraphics[width=\linewidth]{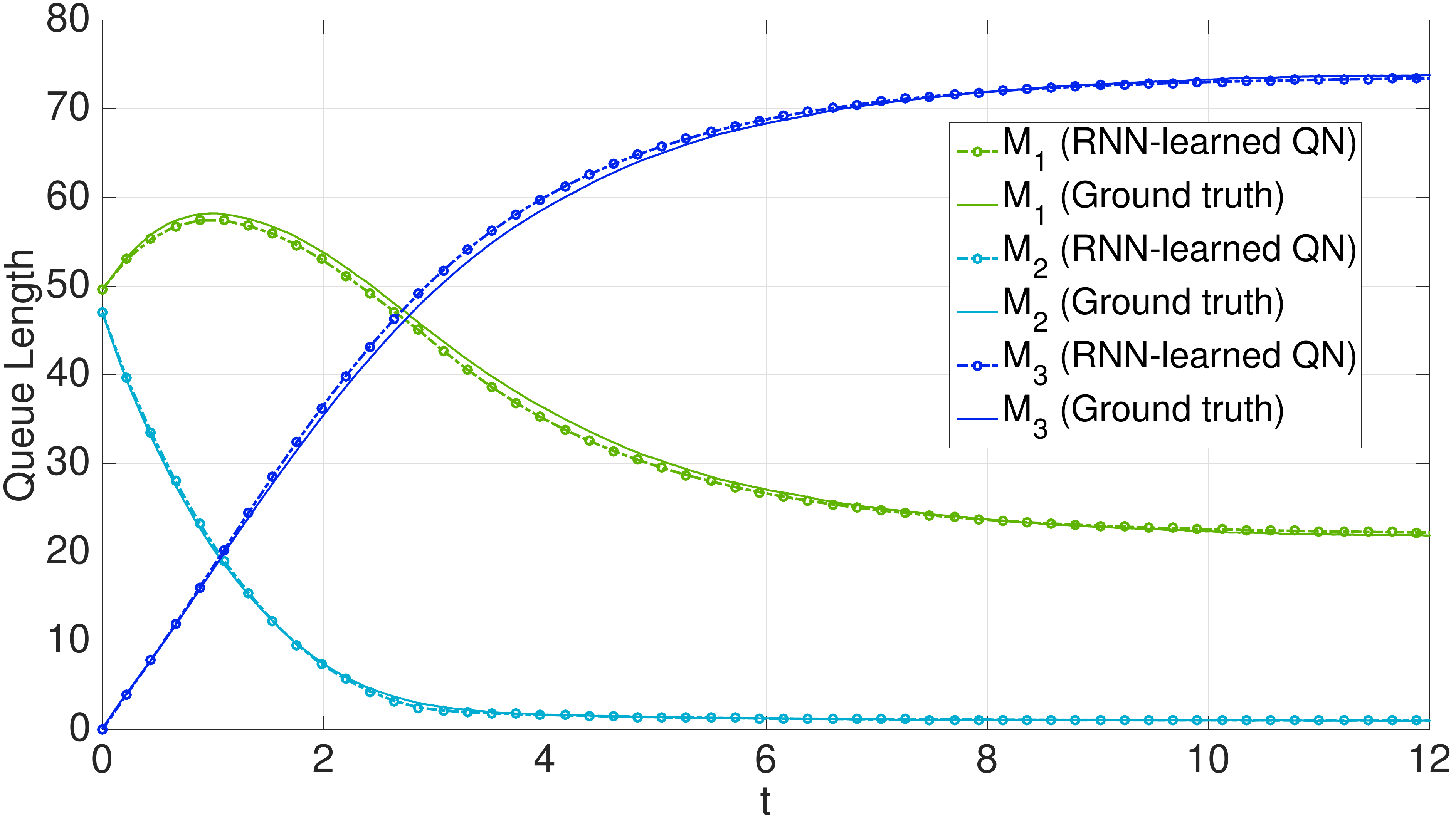}
		\caption{}
		\label{fig:whatif_illustrative}
	\end{subfigure}
	\caption{Numerical evaluation of the running example (see Figure~\ref{fig:lb}). Comparison between simulations of the queue lengths using the RNN-learned QN (marked lines) and the ground-truth one (straight lines) in two different cases: a) a trace used for training ($\mathit{err} = 0.69\%$) with initial population $x(0) = (26, 86, 0)$ and concurrency levels ($s_1$=$1000$, $s_2$=$30$, $s_3$=$25$) b) what-if analysis under unseen initial population vector and unseen concurrency levels ($s_1$=$1000$, $s_2$=$6$, $s_3$=$1$) and initial population $x(0) = (49, 47, 0)$, causing a significant change in the dynamics ($\mathit{err} = 1.49\%$).}
	\label{fig:illustrativeExample-result}
\end{figure}


With this set up, we will have to learn the matrix $\Pt$ (made of $M (M-1)$ weights, since the diagonal is empty) and the vector $\rate$ (made of $M$ weights).

The main goal of this methodology is to learn the actual parameters of the network. Therefore, we enforce some feasibility constraints, namely we require that $\Pt$ rows sum up to $1$ (such that $\Pt$ is a stochastic matrix), absence of self loops and $\rate ≥ 0$ (such that the speed of the stations is non-negative). The non-negativity of the weights is enforced in the framework by clamping the candidate values within the range $[0,\infty)$; stochasticity of $\Pt$ is guaranteed by dividing each weight by the sum of the weights in the corresponding row; the absence of self loops is achieved by setting $\forall i, \Pt_{i,i} = 0$ as a constant. This approach puts our work in the \emph{explainable machine learning} research area \cite{explainable_ai}, and it allows us to link each learned parameter with its role in the system. This link allows us to predict the behavior of the system under new conditions (\emph{what-if} analysis). In contrast, a traditional approach to neural networks would not impose a model and constraints on the parameters, hence giving a read-only model which cannot be clearly interpreted. Indeed, without a direct association between parameters and physical quantities, we cannot study the system under new conditions unless learning a new model.

\begin{example}
The RNN encoding for the $h$-th cell (i.e., the queue length transient evolution at time $h \Delta t$) of our running example is:
\begin{align*}
\use{h-1,1} & = \max\left(\srv_1, \qlens{h-1,1} \right)\\
\use{h-1,2} & = \max\left(\srv_2, \qlens{h-1,2} \right)\\
\use{h-1,3} & = \max\left(\srv_3, \qlens{h-1,3} \right)\\
\qlens{h,1} = &\;\qlens{h-1,1} + \Delta t \left(-\rate_1 \use{h-1,1} +\rate_2 \Pt_{2,1} \use{h-1,2} +\rate_3 \Pt_{3,1} \use{h-1,3} \right)\\
\qlens{h,2} = &\;\qlens{h-1,2} + \Delta t \left(\rate_1 \Pt_{1,2} \use{h-1,1} -\rate_2 \use{h-1,2} +\rate_3 \Pt_{3,2} \use{h-1,3} \right)\\
\qlens{h,3} = &\;\qlens{h-1,3} + \Delta t \left(\rate_1 \Pt_{1,3} \use{h-1,1} +\rate_2 \Pt_{2,3} \use{h-1,2} -\rate_3 \use{h-1,3} \right)
\end{align*}\qed
\end{example}

\subsection{Input data} The RNN is trained over a set of traces. Each trace is made of $H$ vectors, indicated as $\qlen{0}, \qlen{1}, ..., \qlen{H-1} \in \mathbb{R}^{M}_{\geq 0}$. The $i$-th component $\qlen{h,i}$ of each vector $\qlen{h}$ represents a sample of the queue length of each station $i$ at time $h \cdot \Delta t$. Since, as discussed, the fluid approximation can be interpreted as an estimator of the average queue lengths, each trace used in the learning process consists of measurements averaged over a number of independent executions started with the same initial condition; different traces rely on{\tiny } different initial conditions to exercise distinct behaviors of the system.

\subsection{Learning function} 

The learning error function, denoted by $\mathit{err}$, aims to minimize the difference between the queue lengths estimated by the RNN, $\qlens{h}$, and the measurements $\qlen{h}$. It is defined as follows:
\begin{equation}\label{eq:err}
\mathit{err} = \frac{\max_{h = 1}^{H-1} {\|\qlen{h} - \qlens{h}\|}}{2 N} \cdot 100
\end{equation}
where $\| \cdot \|$ indicates the L1 norm. Essentially, it is a maximum relative error. Indeed, since we are studying closed QNs with fixed $N$ circulating clients, the quantity $\|\qlen{h} - \qlens{h}\|/(2 N)$ intuitively measures the proportion of clients (relative to their total number $N$) that are ``misplaced'' (i.e., which are allocated in a different station) at each time step. Since a misplaced client is counted twice (once when missing in a queue and once when is extra in another queue), we divide the norm by 2. Then, the overall error $\mathit{err}$ computes the maximum of such misplacements across all times.

%

\begin{figure}[t]
	\centering
	\begin{subfigure}{0.6\linewidth}
		\centering
		\includegraphics[width=\linewidth]{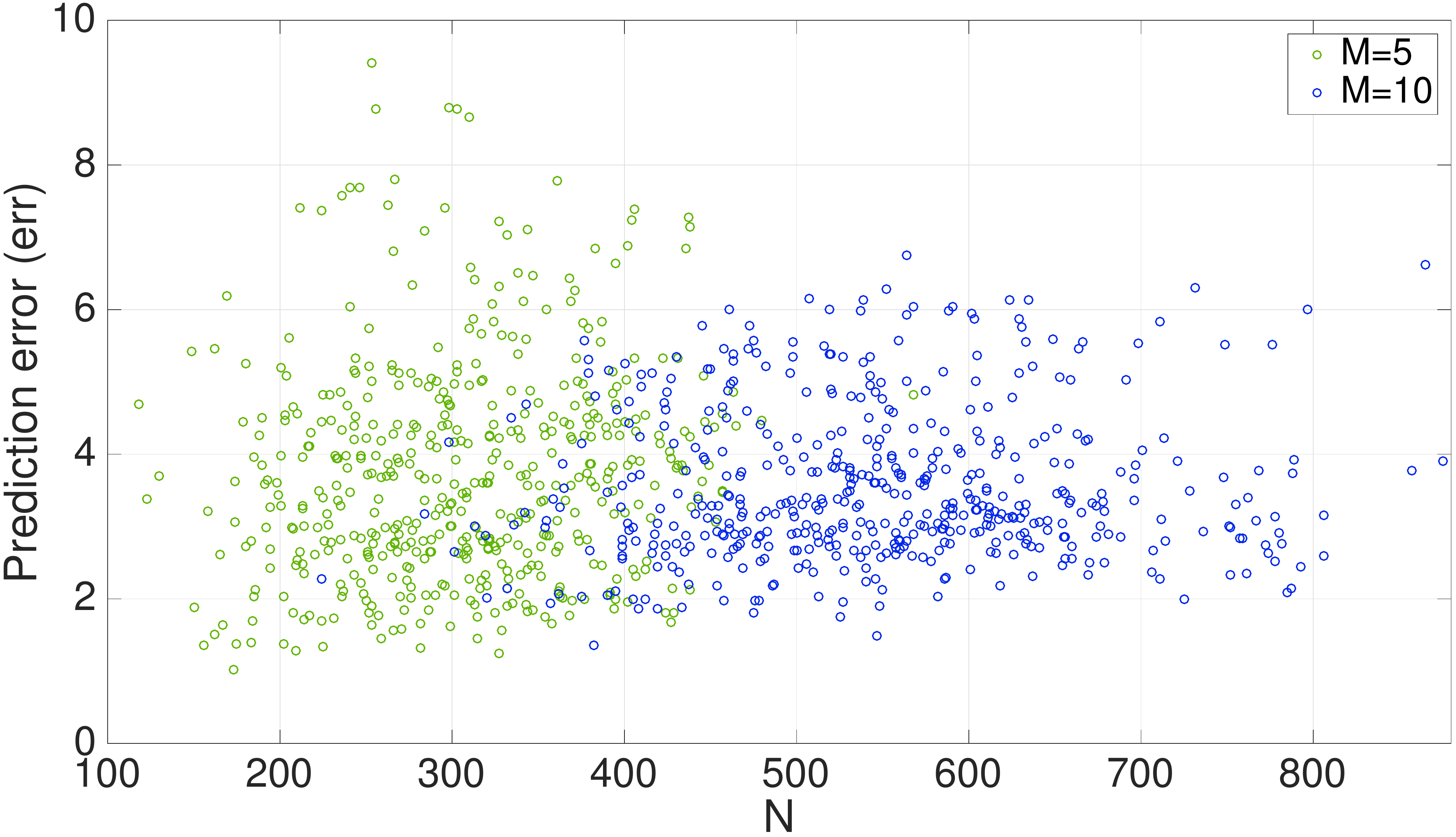}
		\caption{}
		\label{fig:scattera}
	\end{subfigure}
	\begin{subfigure}{0.24\linewidth}
		\centering
		\includegraphics[width=\linewidth]{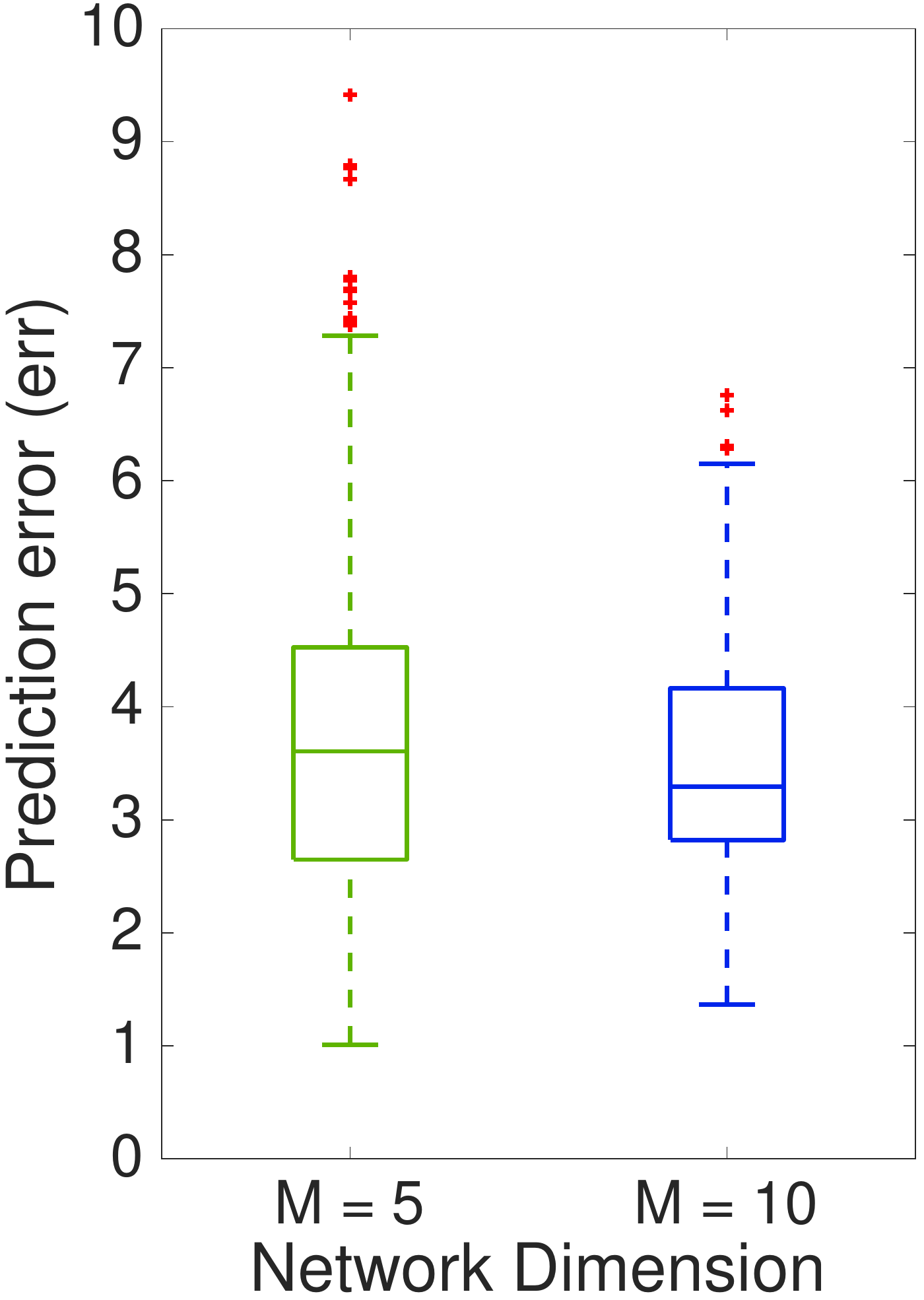}
		\caption{}
		\label{fig:scatterb}
	\end{subfigure}
	\caption{a) Prediction error of the what-if instances where each randomly generated QN is tested with 100 unseen initial population vectors, distinguished in colors with respect to the network size $M$. The x-axis $N$ is the total number of clients in the network which is scatter-plotted against the prediction error defined in Eq.~(\ref{eq:err}). b) Statistics on the prediction error. In each box-plot, the line inside the box represents the median error, the upper and lower side of the box represent the 25th and 75th percentiles, while the upper and lower limit of the dashed line represent the extreme points not to be considered outliers, and in red we depict the outliers (12 with M=5, 4 with M=10).}\label{fig:scatter}
\end{figure}

\begin{figure*}
	\centering
	\begin{subfigure}{0.195\linewidth}
		\centering
		\includegraphics[width=\linewidth]{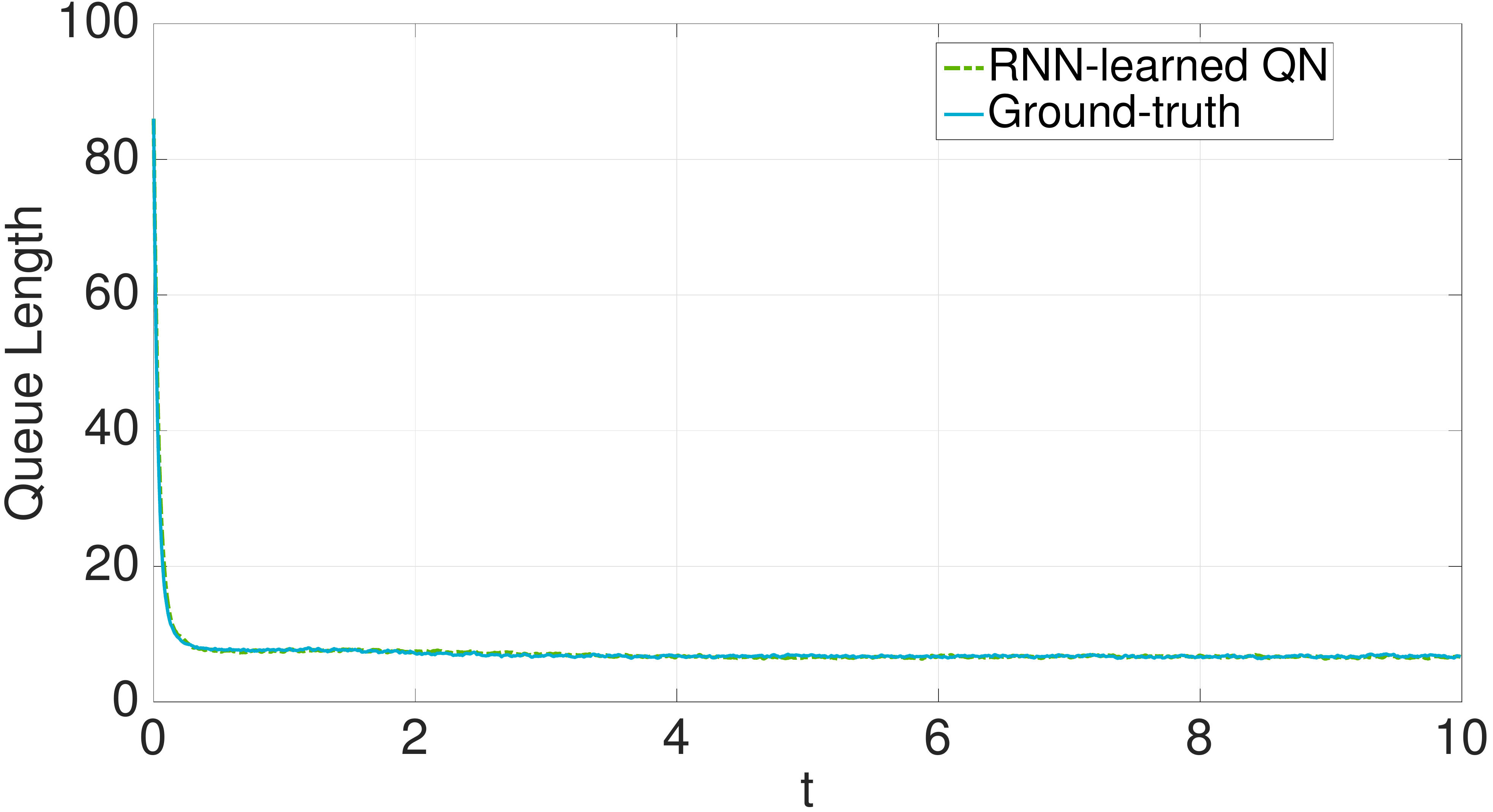}
		\caption{Station 1}
	\end{subfigure}
	\begin{subfigure}{0.195\linewidth}
		\centering
		\includegraphics[width=\linewidth]{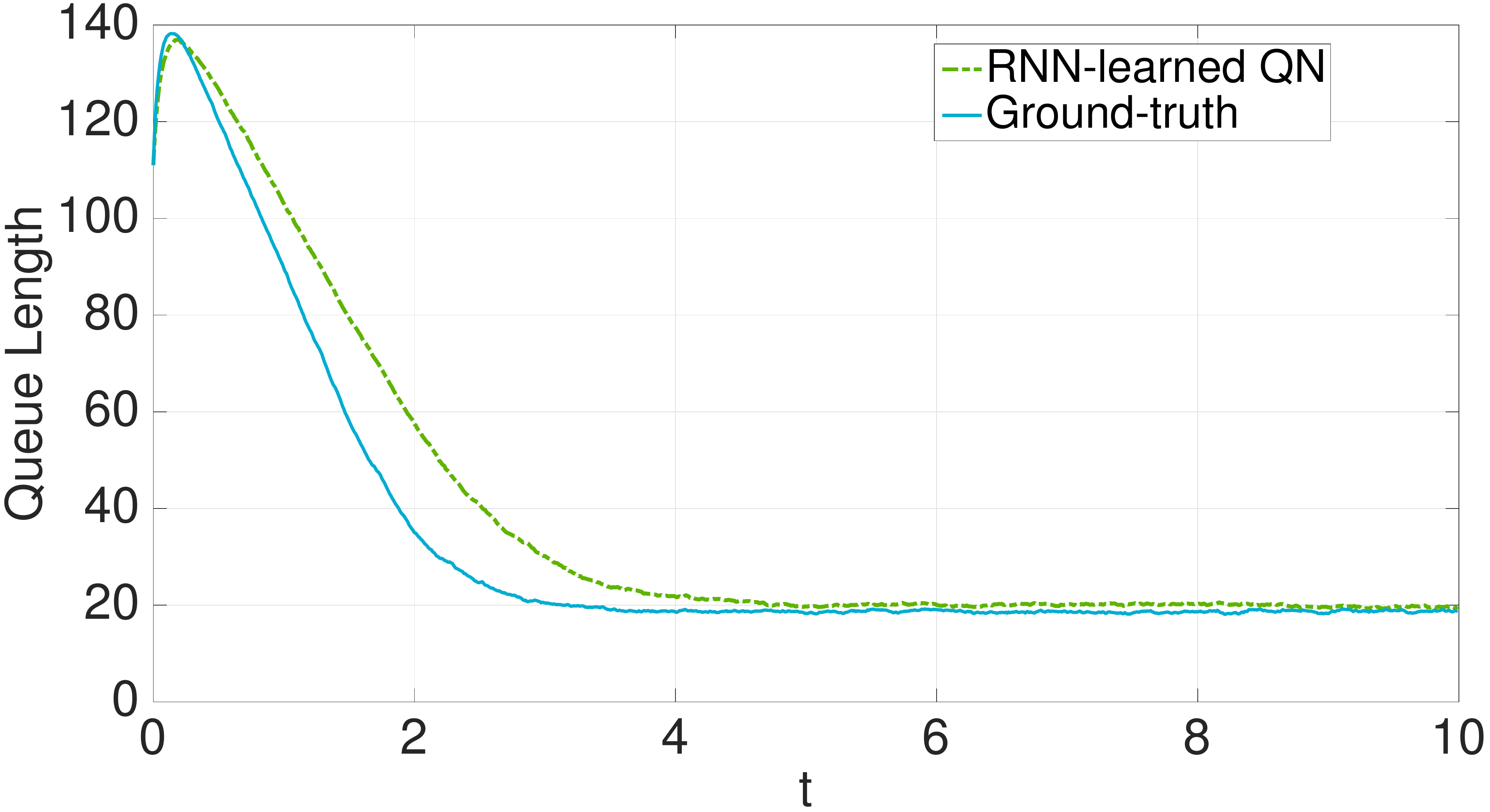}
		\caption{Station 2}
	\end{subfigure}
	\begin{subfigure}{0.195\linewidth}
		\centering
		\includegraphics[width=\linewidth]{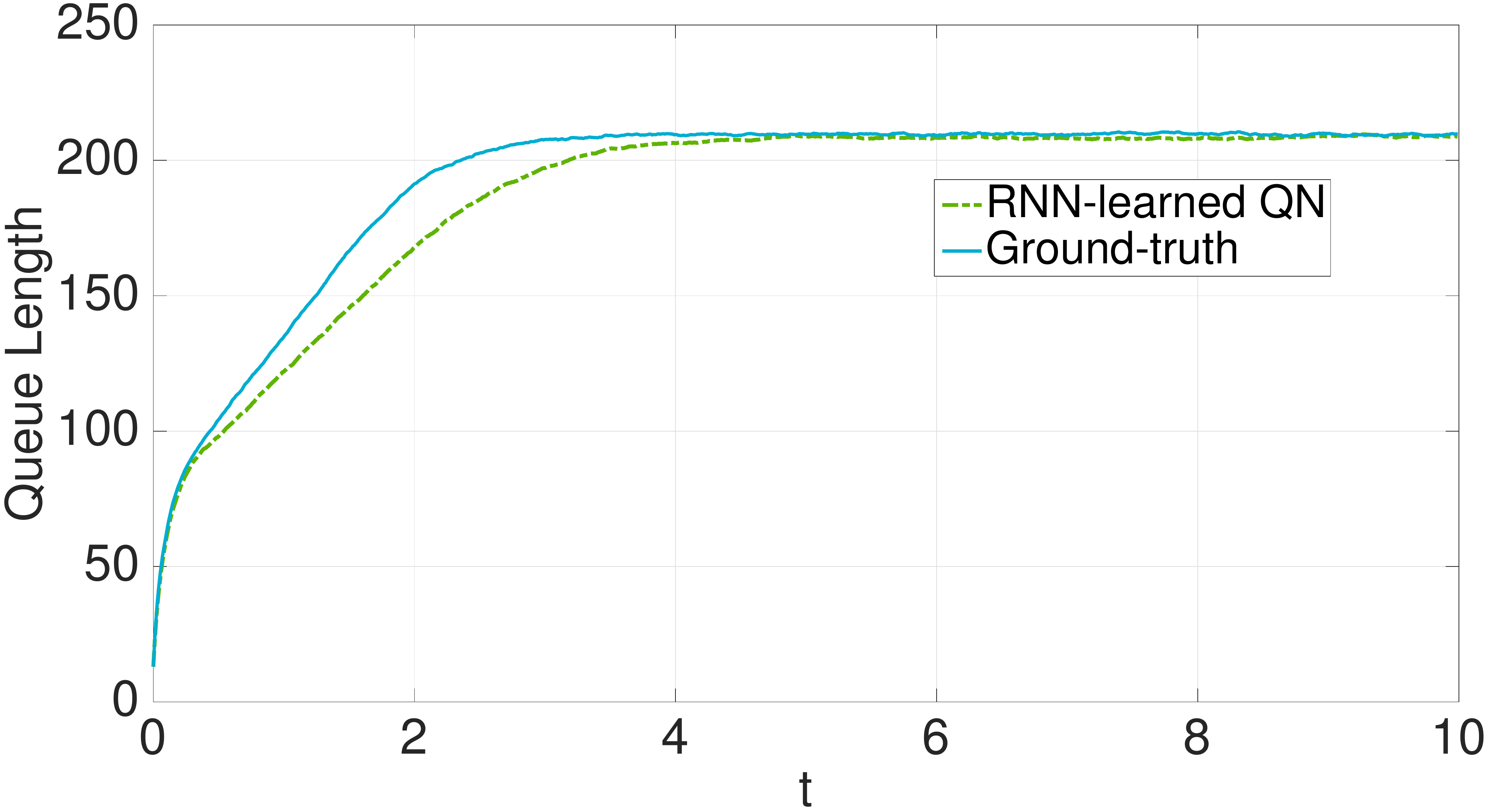}
		\caption{Station 3}
	\end{subfigure}
	\begin{subfigure}{0.195\linewidth}
		\centering
		\includegraphics[width=\linewidth]{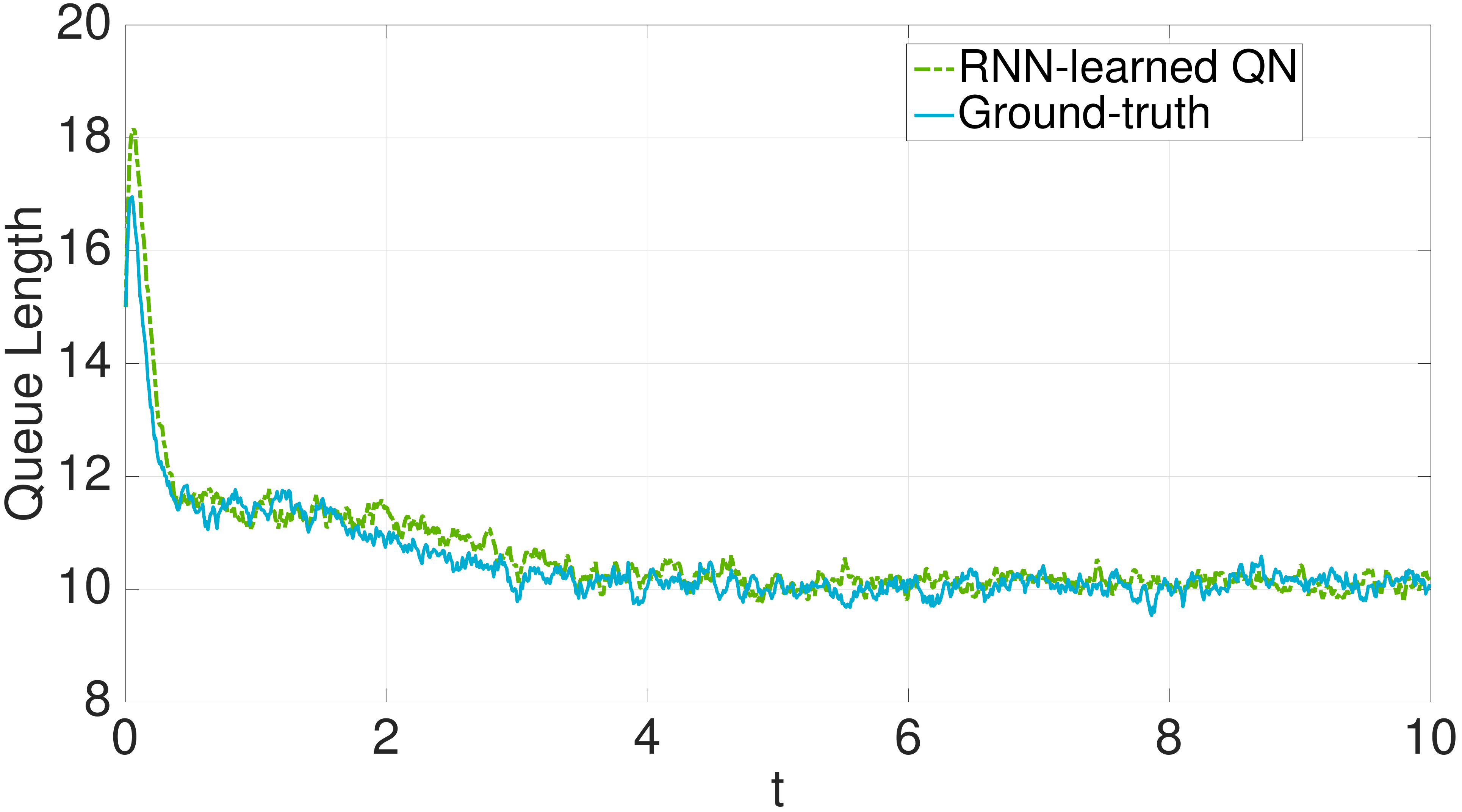}
		\caption{Station 4}
	\end{subfigure}
	\begin{subfigure}{0.195\linewidth}
		\centering
		\includegraphics[width=\linewidth]{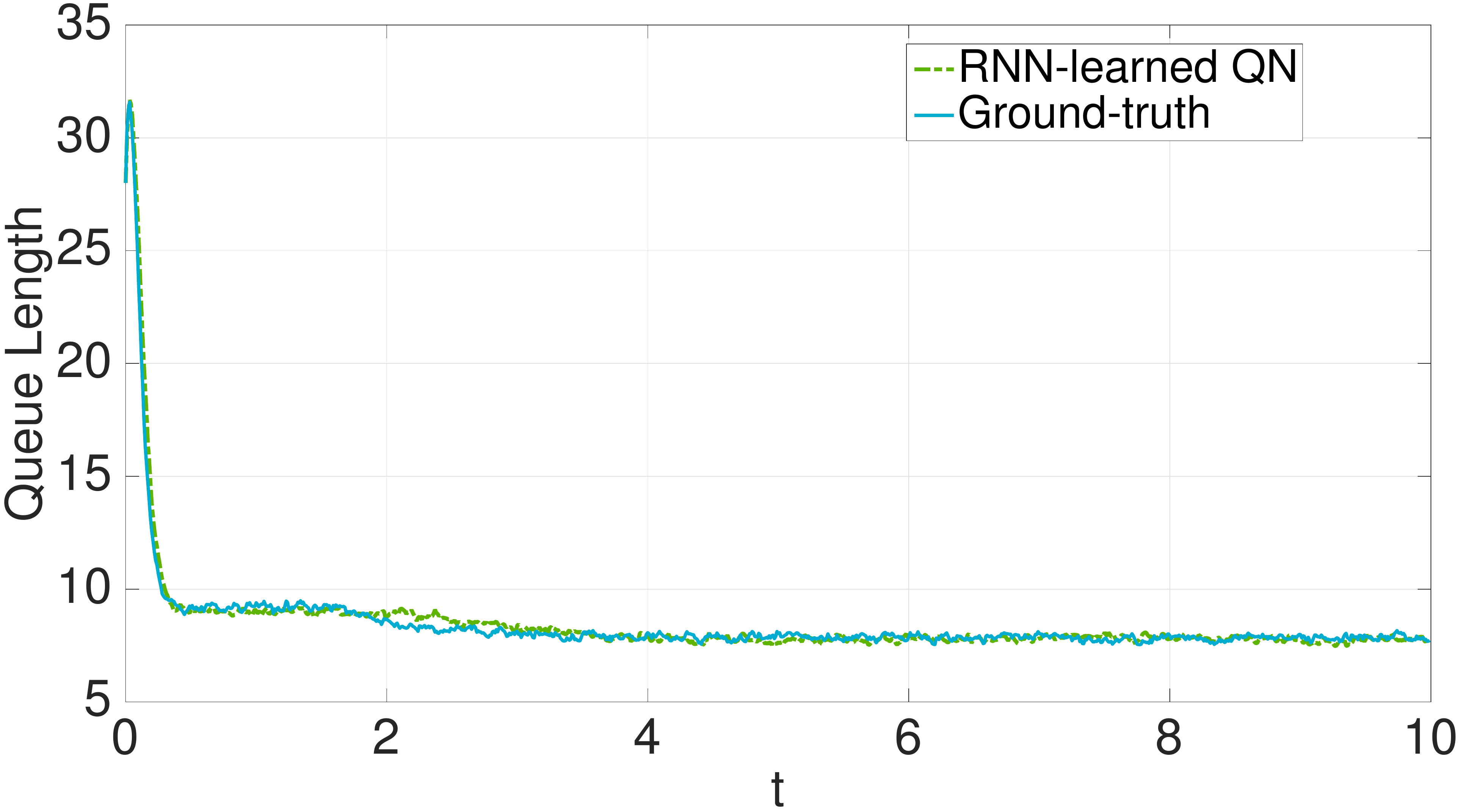}
		\caption{Station 5}
	\end{subfigure}
	\caption{Comparison between the ground-truth queue lengths and those predicted by the RNN-learned QN on the test case that induced the maximum prediction error among the what-if over population (error: 9.41\%). The error was attained on a randomly generated QN with $M = 5$ stations, using the unseen initial population vector (86,111,13,15,28). The straight line represents the ground-truth dynamics of the QN model; the dashed line represents the evolution of the RNN-learned QN.}
	\label{fig:qlen}
\end{figure*}

\begin{example}
Let us consider our running example by fixing ground-truth parameters as follows. During the learning phase, we studied the system with $\srv = (1000, 30, 25)$ and predicted the behavior with $\srv = (1000, 6, 1)$, while we kept $\Pt$ and $\rate$ unchanged at
	\[
		\Pt = \left[\begin{array}{ccc}0 & 0.5 & 0.5\\1 & 0 & 0\\1 & 0 & 0\end{array}\right]\quad\quad\quad\quad\rate = (1, 11, 11)
	\]
Using the experimental set up that will be discussed in the next section, we generated the training dataset by collecting 50 traces, one for a different randomly generated initial population vector. Each trace was the average of  500 independent simulations  recording the transient evolutions of the queue lengths.  Figure~\ref{fig:illustrativeExample-result} reports the comparison between queue lengths of the RNN-learned QN and the ground-truth one, showing very good accuracy on an instance of the training set (Figure~\ref{fig:learning_illustrative}) as well as high predictive power of the model under unseen initial populations and concurrency levels which cause bottleneck shift and considerable longer transient dynamics (Figure~\ref{fig:whatif_illustrative}).

\end{example}


\begin{figure}[t]
	\centering
	\begin{subfigure}{0.6\linewidth}
		\centering
		\includegraphics[width=\linewidth]{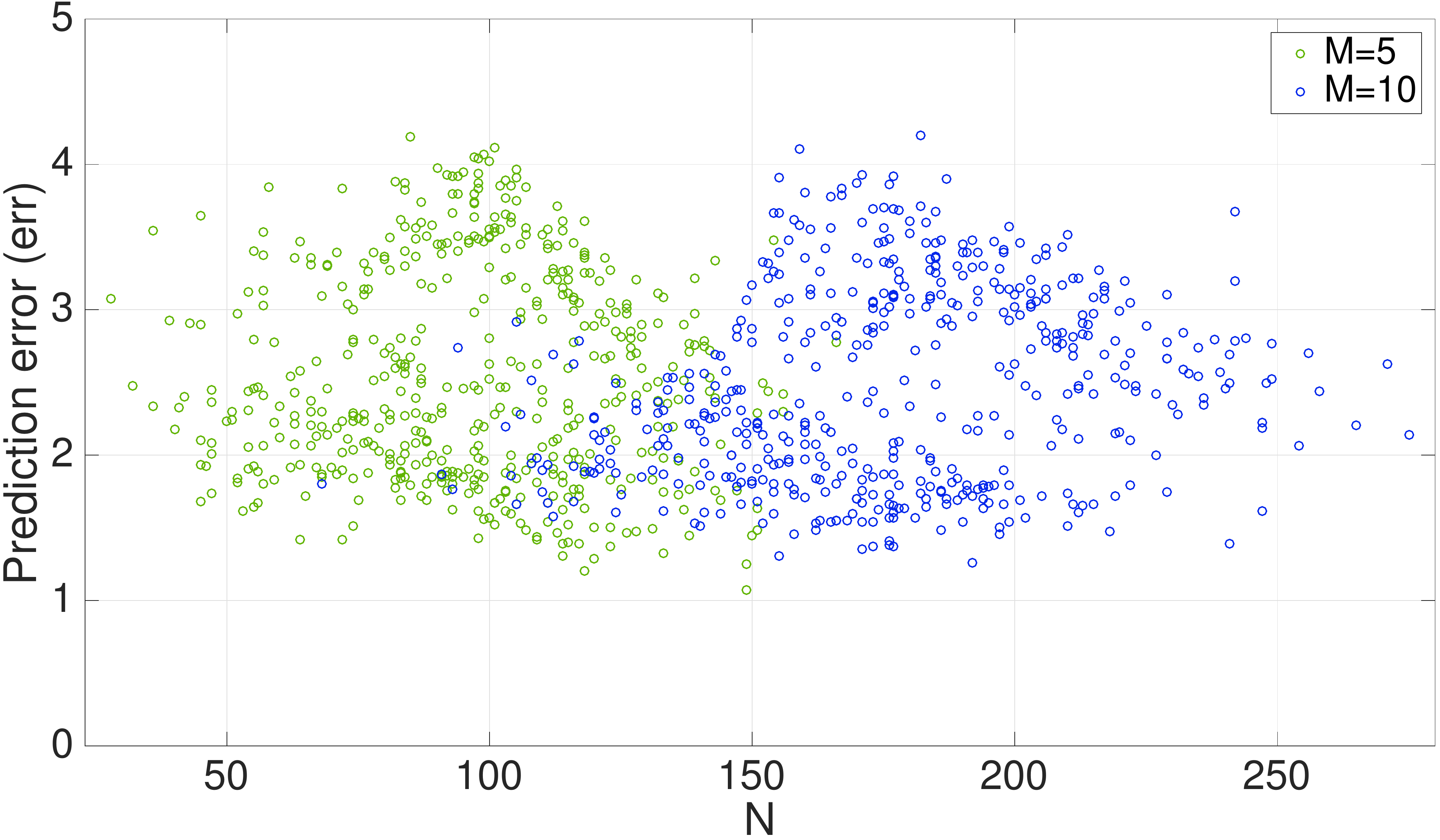}
		\caption{}
		\label{fig:scatter2a}
	\end{subfigure}
	\begin{subfigure}{0.24\linewidth}
		\centering
		\includegraphics[width=\linewidth]{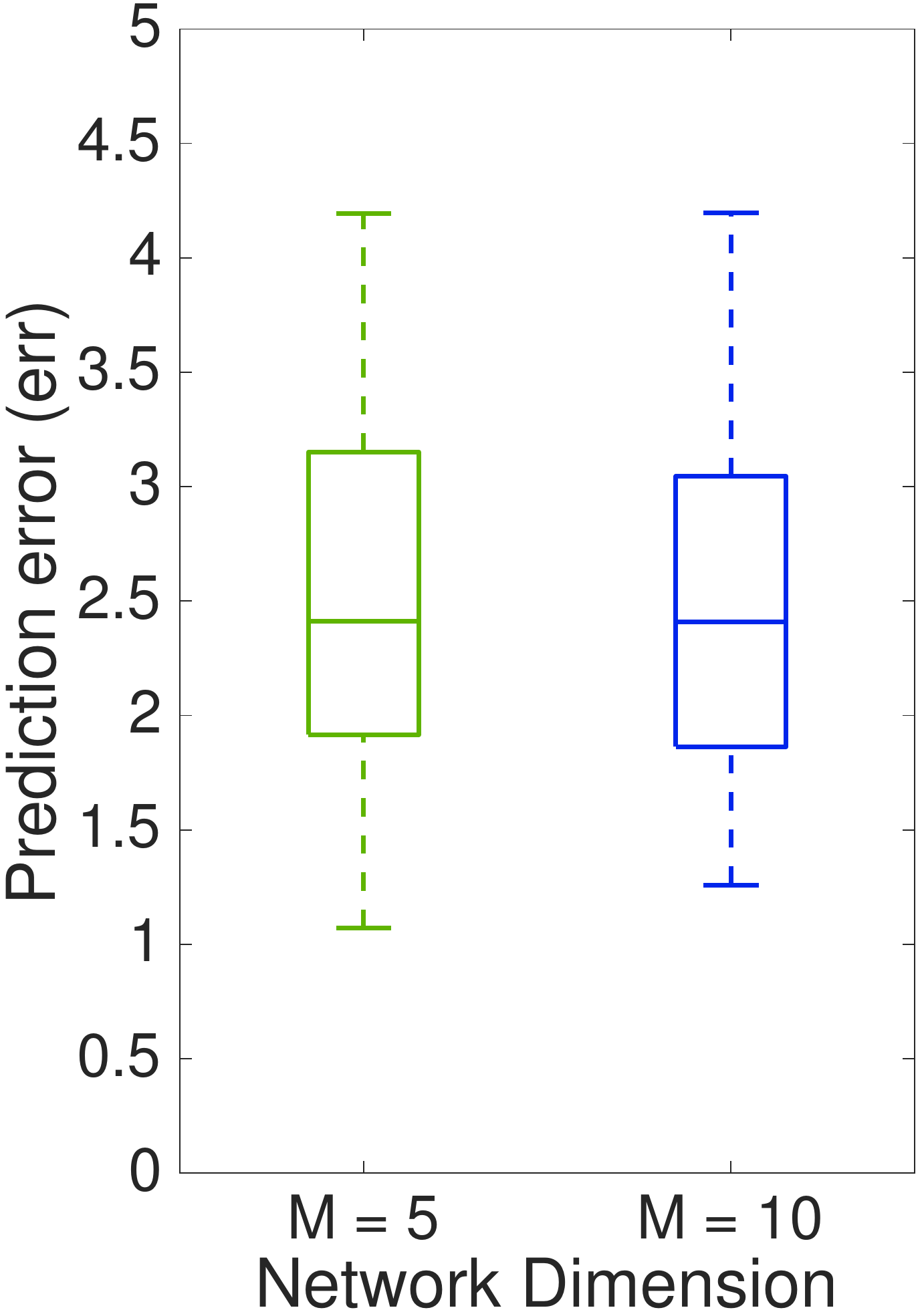}
		\caption{}
		\label{fig:scatter2b}
	\end{subfigure}
	\caption{a) Prediction error of the what-if instances by changing the concurrency level of the most utilized station in each of the randomly generated QNs. b) Statistics on the prediction error.}\label{fig:scatter2}
\end{figure}

\section{Numerical Evaluation}\label{sec:evaluation}

In this section we evaluate the effectiveness of the proposed approach by considering both synthetic benchmarks and a real case study. For all our tests, the RNNs were implemented using the Keras framework~\cite{chollet2015keras} with the TensorFlow backend~\cite{tensorflow2015}. Learning was performed using a machine running the 4.15.0-55-generic~Linux kernel on a Intel(R)~Xeon(R)~CPU E7-4830 v4 machine at 2.00GHz with 500~GB of RAM.

\subsection{Synthetic case studies}

\paragraph{Set-up} For our synthetic tests we considered randomly generated networks of size $M = 5$ and $M = 10$. For each case, we generated 5 QNs by uniformly sampling at random the entries of the routing probability matrices, the service rates in the interval $[4.0, 30.0]$, and the concurrency levels in the interval $\{15, 16, \ldots, 30\}$.
For the training of each QN, we generated 100 traces, each being the average over 500 independent stochastic simulations (generated using Gillespie's algorithm \cite{doi:10.1146/annurev.physchem}). Each trace exercised the model with a distinct initial population vector such that the number of clients at each station was drawn uniformly at random from $\{ 0, \ldots, 40\}$; as a result, the total number of clients in the network varies across traces. For each network, learning was performed by equally splitting the 100 traces for training and validation, iterating Adam~\cite{Adam} with learning rate equal to $0.05$, until the error computed on the validation set did not improve by at least $0.01\%$ in the last 50 iterations. On average, the learning took 74 minutes and 86 minutes for the cases $M=5$ and $M=10$, respectively.
 
\paragraph{Discretization methodology} Two important parameters are the length of the trace, i.e., the time horizon $T$ of the stochastic simulations, and the choice of the discretization interval $\Delta t$; these are related with the number of cells in the RNN $H$ by $T = (H-1) \Delta t$. Longer time horizons lead to larger simulation (hence, training) runtimes. Too short traces might not expose the full dynamics of the system. Further, following basic facts about ODE discretization~\cite{petzold:ode:book}, the interval $\Delta t$ should be chosen small enough such that no important dynamics is lost across two successive time steps; thus, longer time horizons might need more time steps, hence more cells in the RNN. It is worth remarking that these considerations are model-specific. That is, the choice of such hyper-parameters must be carefully done depending on the specific QN under study.  

For the synthetic case studies, we set $T=10$ and $\Delta t = 0.01$, hence $H=1000$.
\paragraph{Predictive power} We evaluate the predictive power of the learned QNs by performing two distinct ``what-if'' analyses under unseen configurations, by changing populations of clients and the concurrency levels of the stations, respectively.

\begin{figure*}
	\centering
	\begin{subfigure}{0.195\linewidth}
		\centering
		\includegraphics[width=\linewidth]{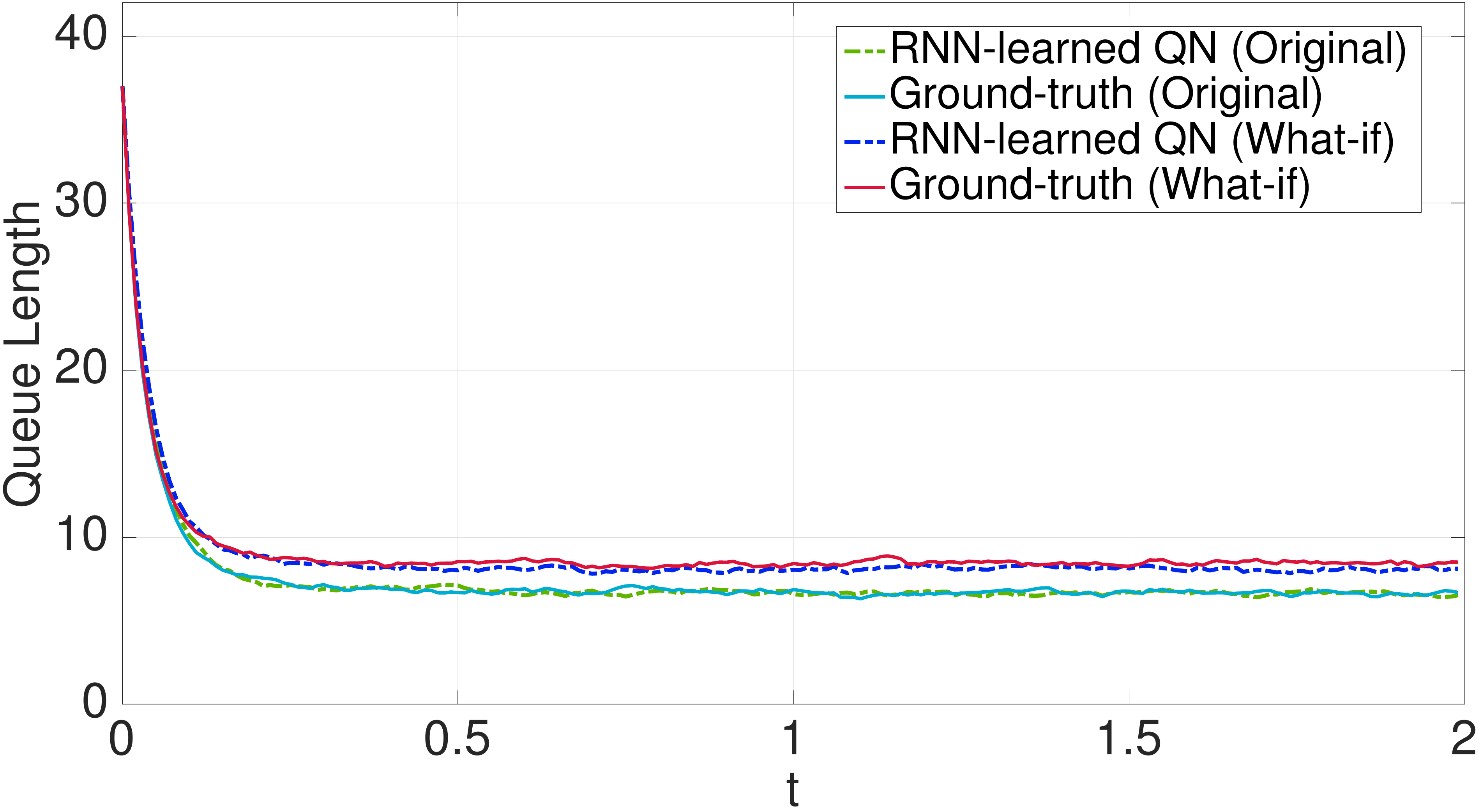}
		\caption{Station 1}
	\end{subfigure}
	\begin{subfigure}{0.195\linewidth}
		\centering
		\includegraphics[width=\linewidth]{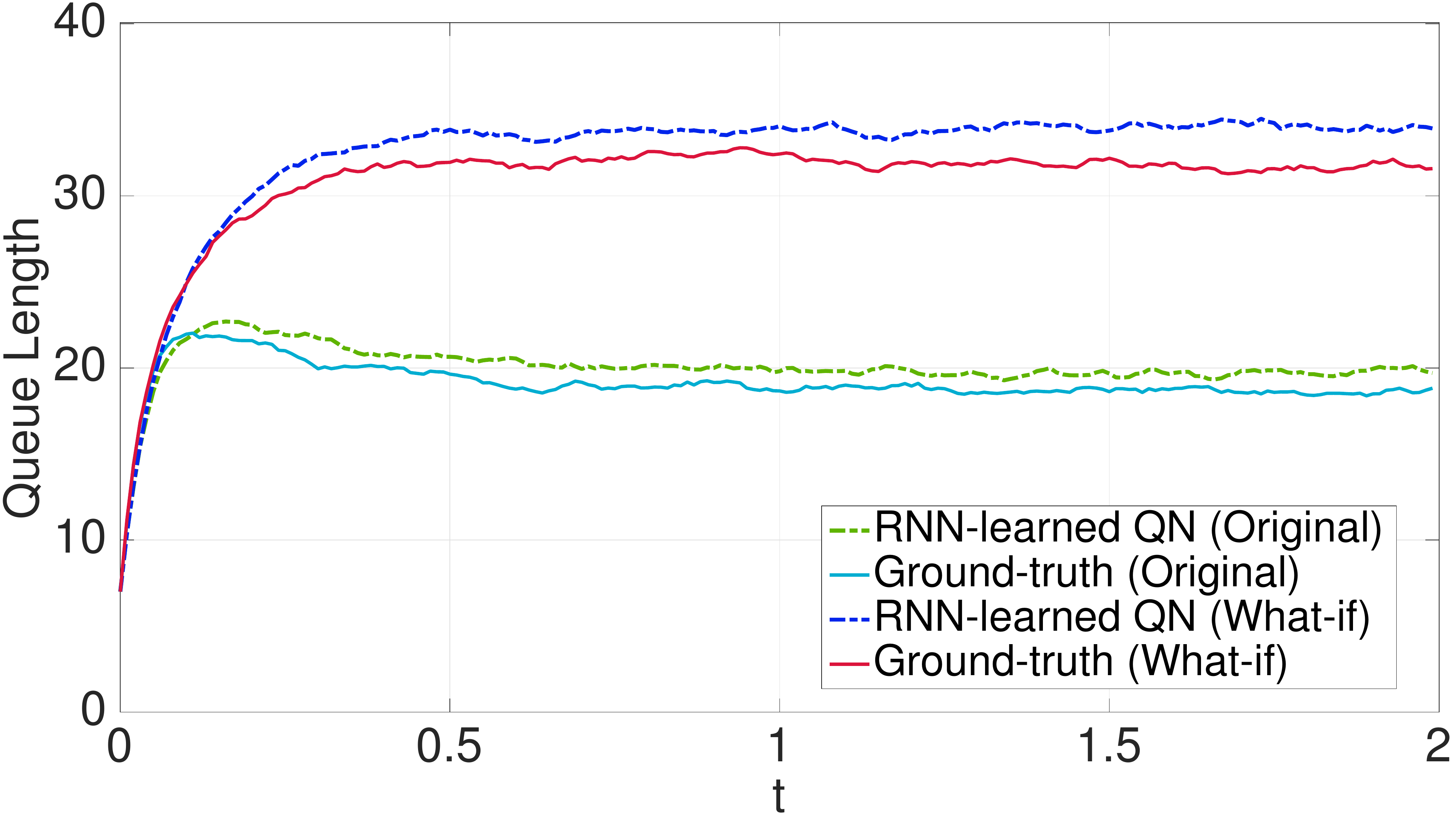}
		\caption{Station 2}
	\end{subfigure}
	\begin{subfigure}{0.195\linewidth}
		\centering
		\includegraphics[width=\linewidth]{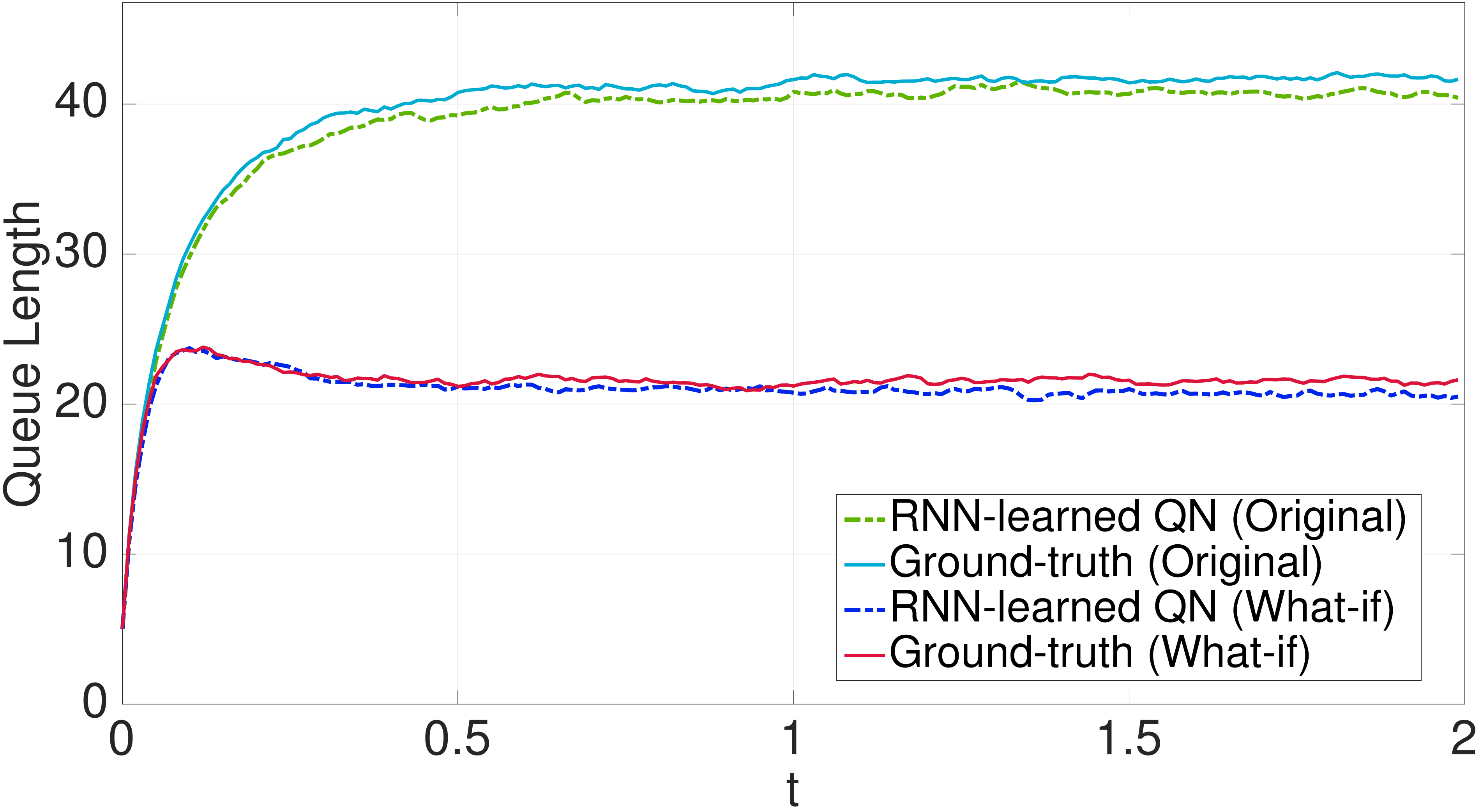}
		\caption{Station 3}
	\end{subfigure}
	\begin{subfigure}{0.195\linewidth}
		\centering
		\includegraphics[width=\linewidth]{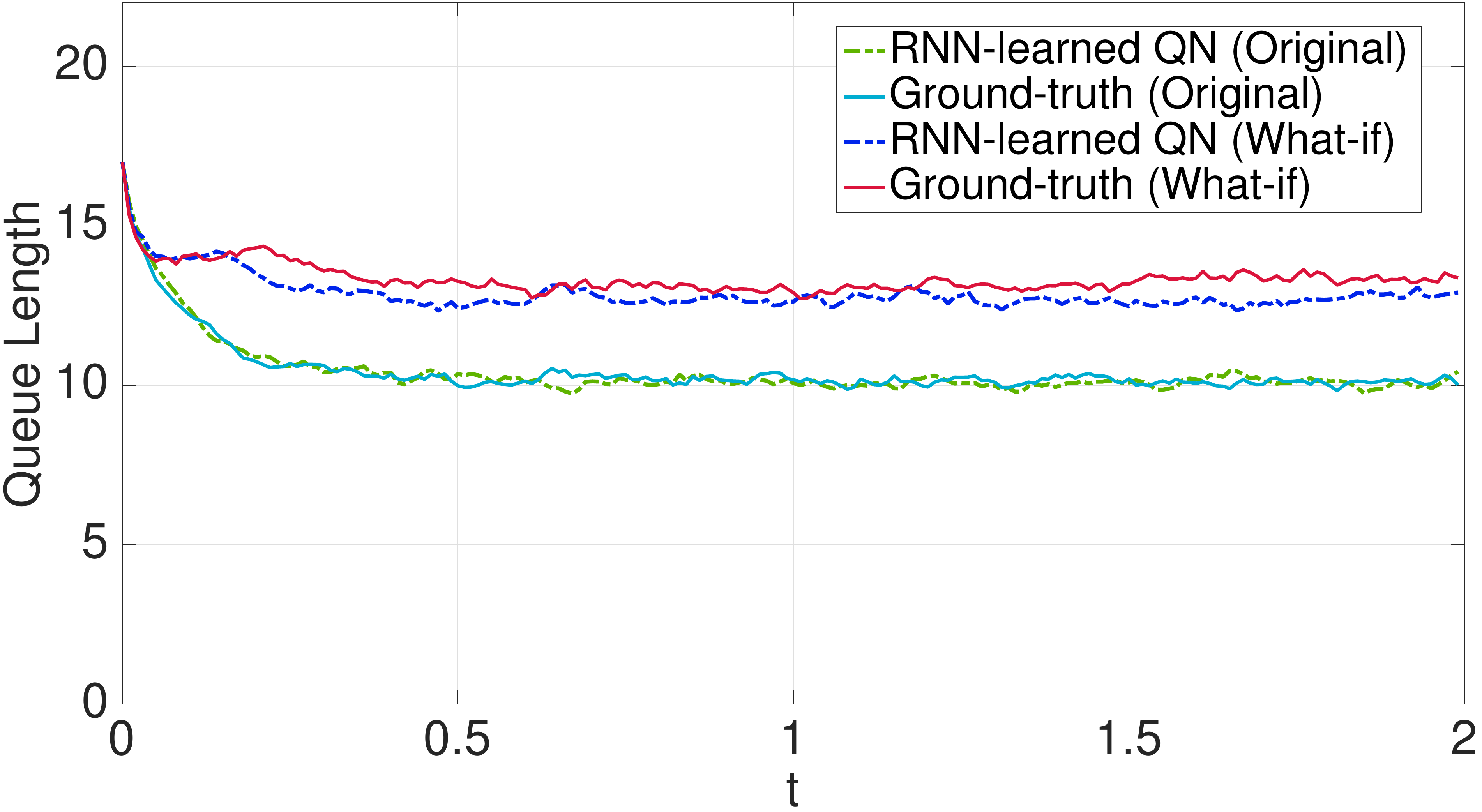}
		\caption{Station 4}
	\end{subfigure}
	\begin{subfigure}{0.195\linewidth}
		\centering
		\includegraphics[width=\linewidth]{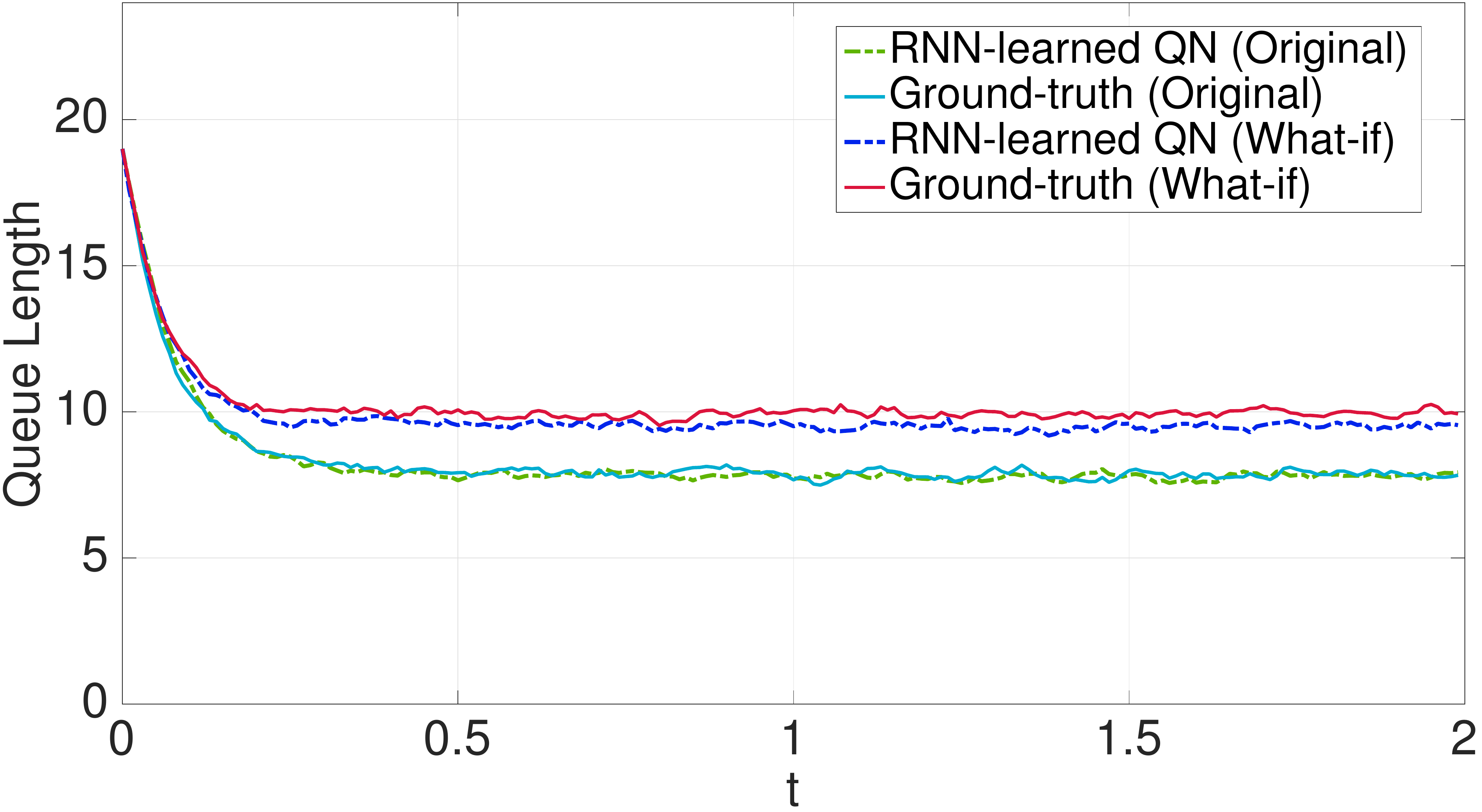}
		\caption{Station 5}
	\end{subfigure}
	\caption{Comparison between the ground-truth queue lengths and those predicted by the RNN-learned QN on the test case that induced the maximum prediction error (4.19\%), before and after the what-if change of server concurrency. The error was attained on a randomly generated QN with M = 5 stations. The cyan line denotes the averages under the original conditions (before the what-if change) with the ground-truth QN; the green line gives the predictions of the RNN-learned QN with the original values; the red line shows ground-truth simulations with the unseen number of servers for the bottleneck station (increased from 17 to 37); the blue line shows the averages after the what-if change for the RNN-learned QN.}
	\label{fig:whatifServer}
\end{figure*}

\paragraph{What-if analysis over client population} We tested each of the  randomly generated QNs with 100 new initial population vectors that were not used in the learning phase. We compared the averages (over 500 stochastic simulations) of the the ground-truth queue-length dynamics with those produced by the RNN-learned QN with those unseen initial conditions. 

Figure~\ref{fig:scattera} shows a scatter plot of the prediction error with respect to the total number of clients circulating in the system,  reporting errors less than $10\%$ in all cases. The box-plots in Figure~\ref{fig:scatterb} show that there is no statistically significant difference between the errors for the diffrent sized models. Figure~\ref{fig:qlen} compares the predicted and ground-truth queue lengths for the instance with the maximum prediction error, showing a very good generalizing power for the queue-length dynamics at all stations. 

\paragraph{What-if over concurrency levels.} To validate the predictive power under varying concurrency levels, for each generated QN we found the station with the highest ratio between the steady state queue length and its number of servers (\emph{bottleneck}), and added servers in steps of 20 to this station until it was not the bottleneck anymore. Then we compared the dynamics of the ground-truth model (i.e., simulated with the original $\Pt$ and $\rate$ but with the new server concurrency levels) against those obtained by simulating the learned model with the new server concurrency levels. We considered the notion of prediction error as shown in Equation~\eqref{eq:err}.

Figure~\ref{fig:scatter2a} shows the results of this what-if study, reporting a prediction error less than $5\%$ across all instances. Also in this case, there is no statistically significant difference in the error statistics depending on the network size $M$ (see Figure~\ref{fig:scatter2b}). Figure~\ref{fig:whatifServer} plots the comparison of the queue-length dynamics of the what-if instance (i.e., with an unseen server concurrency level) that reported the maximum prediction error (i.e., $4.5\%$) against the original ones (i.e., prediction error of $3.1\%$). We can appreciate that the unseen concurrency levels do change the QN behavior dramatically, effectively switching the bottleneck from station 3 to station 2.    

This result does support the combination of machine learning and white-box performance models by showing that, once learned, the QN can be used for evaluating the behavior of the model under execution scenarios for which the QN has not been trained. 


\begin{figure*}
	\centering
	\includegraphics[width=0.75\linewidth]{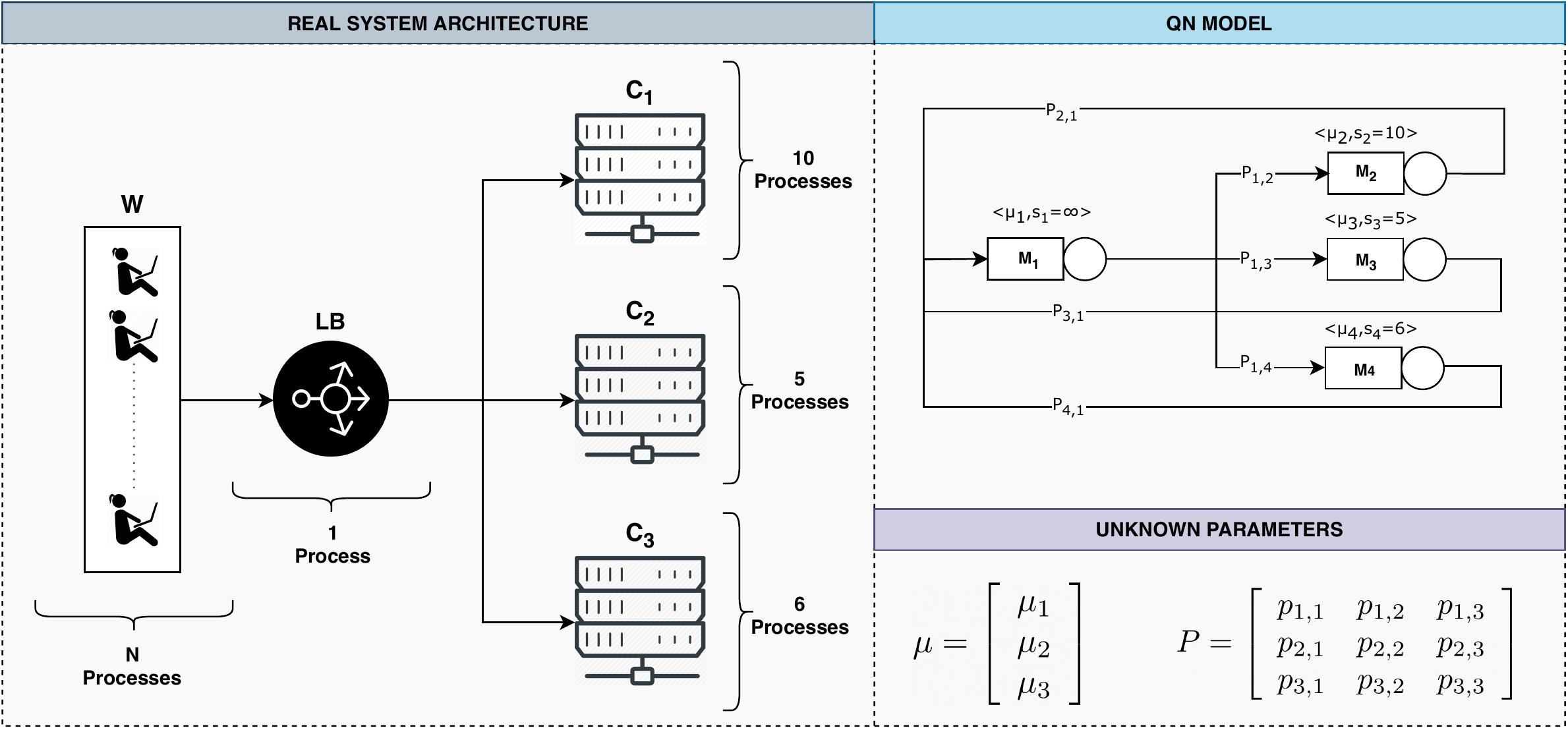}
	\caption{Case study architecture.}
	\label{fig:casestudy}
\end{figure*}

\subsection{Real case study}\label{sec:realCaseStudy}

\paragraph{Set-up} The benchmark used in this evaluation is based on an in-house developed web application that serves user requests with an input dependent load. We deployed the target application as a NodeJs~\cite{tilkov2010node} load-balancing system with three replicas.
Figure~\ref{fig:casestudy} (left) depicts the system architecture. Component \node{W} represents the reference station, where clients enter the system by issuing requests to the load balancer \node{LB}, which redistributes them across the web servers uniformly. In the real system, such uniform assignment is achieved by fixing equal \emph{weights} to the target nodes. Components \node{C1}, \node{C2}, and \node{C3} represent the three web-server instances devoted to the actual processing of user requests (e.g., producing an HTML page). Each node in the Figure~\ref{fig:casestudy} is annotated with its concurrency level (i.e., the number of available processes), which we considered fixed parameters. 


\begin{figure*}
	\centering
	\begin{subfigure}{0.246\linewidth}
		\centering
		\includegraphics[width=\linewidth]{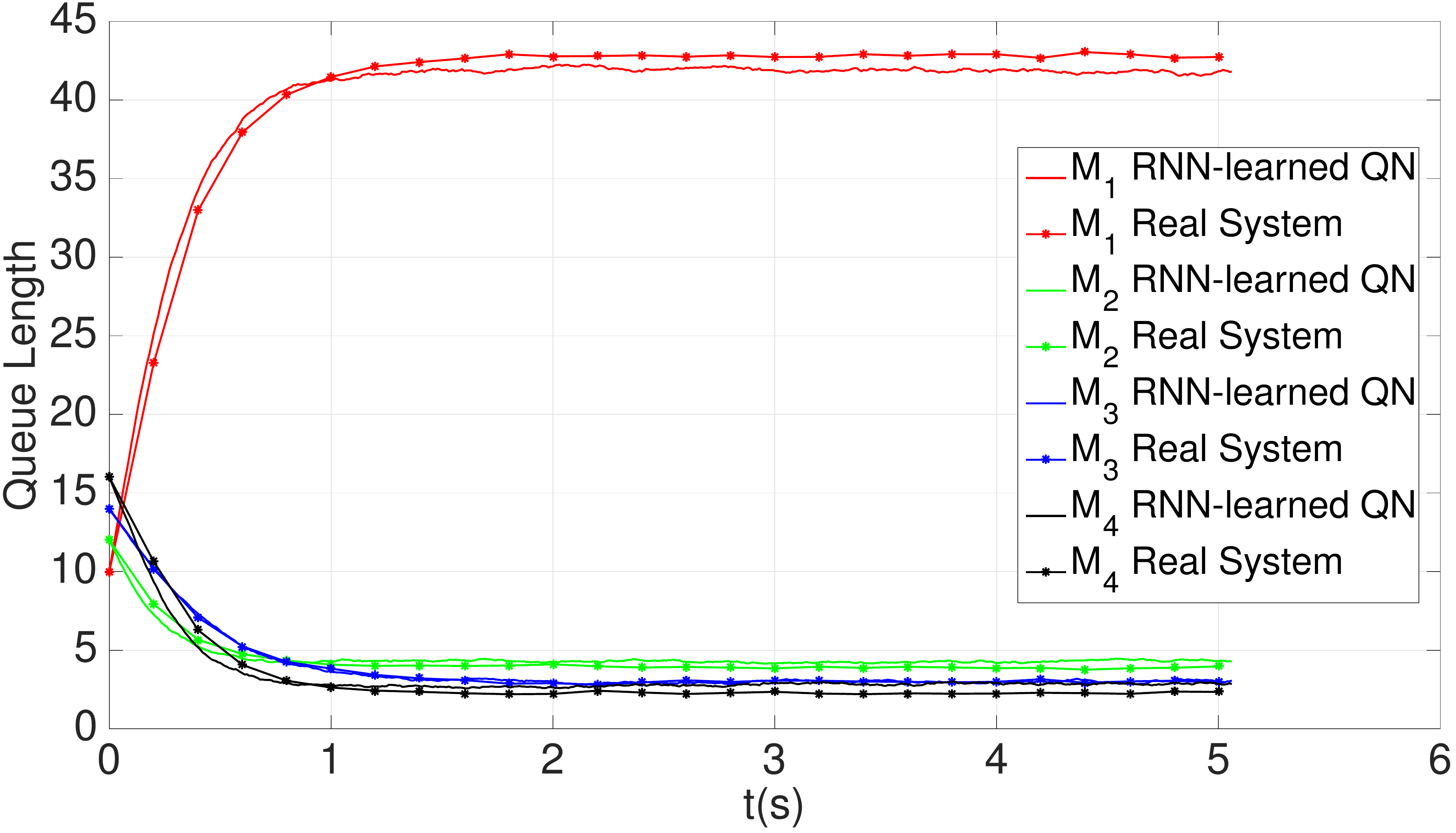}
		\caption{$k=2$, err=6.46\%}
		\label{fig:bottleneck2}
	\end{subfigure}
	\begin{subfigure}{0.246\linewidth}
		\centering
		\includegraphics[width=\linewidth]{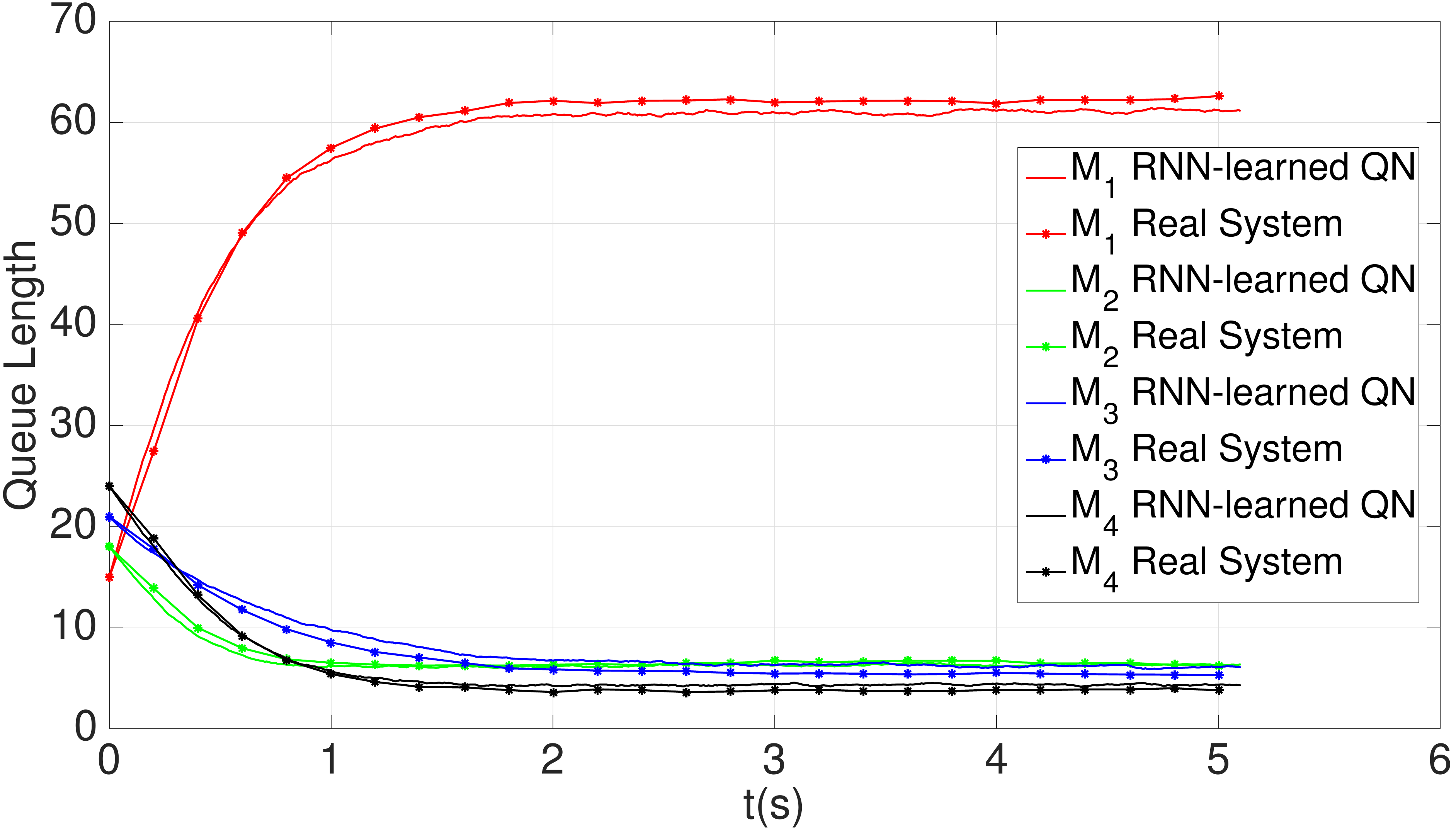}
		\caption{$k=3$, err=5.03\%}
		\label{fig:bottleneck3}
	\end{subfigure}
	\begin{subfigure}{0.246\linewidth}
		\centering
		\includegraphics[width=\linewidth]{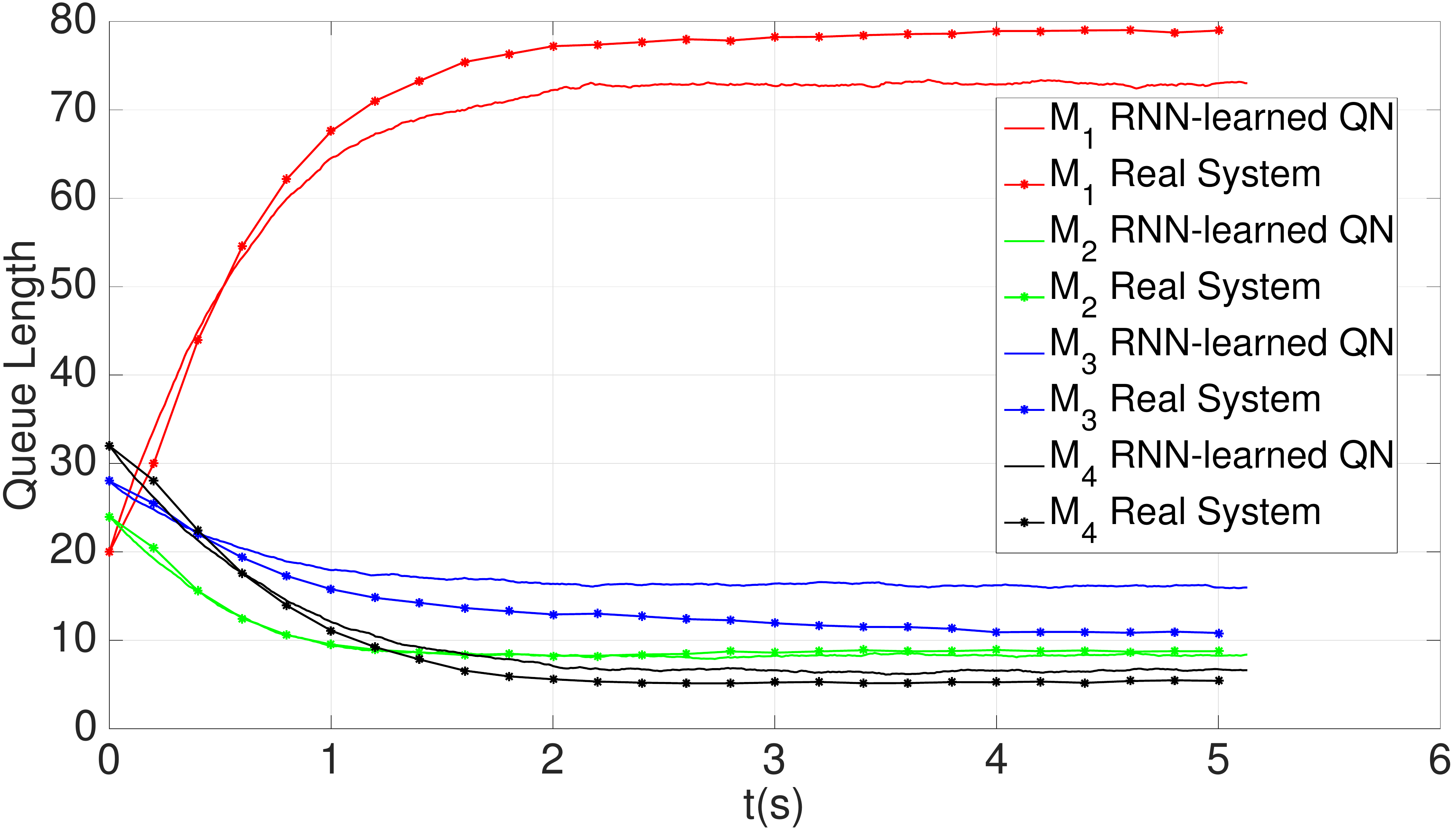}
		\caption{$k=4$, err=6.45\%}
		\label{fig:bottleneck4}
	\end{subfigure}
	\begin{subfigure}{0.246\linewidth}
		\centering
		\includegraphics[width=\linewidth]{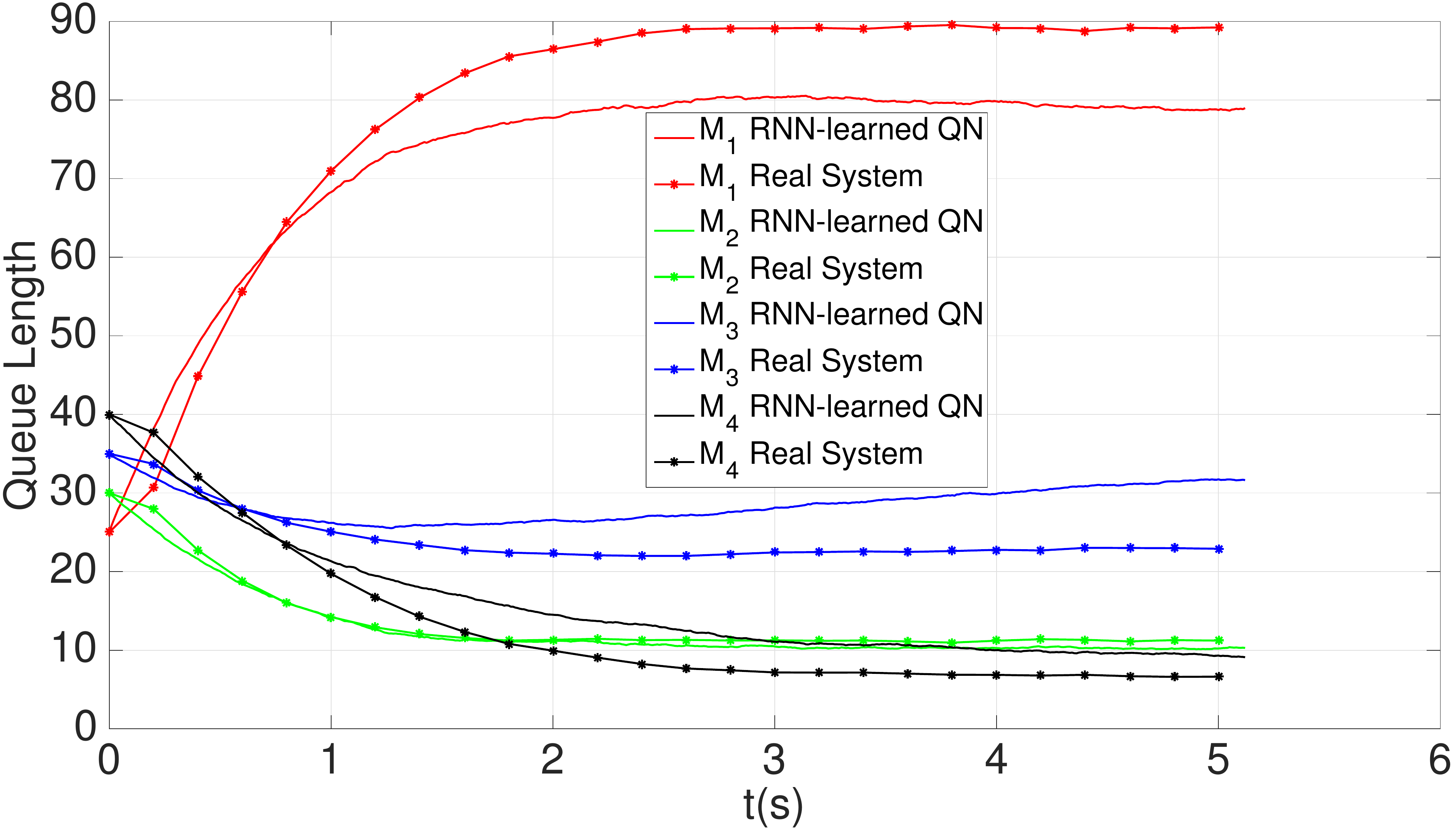}
		\caption{$k=5$, err=9.05\%}
		\label{fig:bottleneck5}
	\end{subfigure}
	\caption{Comparison between the real system dynamics (i.e., marked lines) and the RNN-learned QN (i.e., straight lines) in what-if cases over increasing circulating populations $N$, given by $N = 26k$.}
	\label{fig:bottleneck}
\end{figure*}

Specifically, we implemented \node{W} as a multi-threaded Python program. Each thread runs an independent concurrent user (i.e., one of the $N$ processes) that iteratively accesses the system, sleeping for an exponentially distributed delay between subsequent requests;  \node{LB} is a single-threaded NodeJs web server which act as a randomized load balancer. Finally, \node{C1}, \node{C2} and \node{C3} are multi-threaded NodeJs Clusters\footnote{\url{https://nodejs.org/api/cluster.html}} whose load is generated by sleeping for an exponential distributed delay (i.e., the average value is given as input parameter of each cluster). We remark that although we were able to roughly fix the distribution of the service demands their exact shape is still unknown since it is influenced by subtler factors that are hidden to developers (e.g., internal behavior of the web server, communication aspects). Moreover, in order to evaluate our learning methodology in an interesting scenario, we deployed the three replicas of the system with different parallelism levels and different service rates. 

Similarly to~\cite{wang2016maximum}, we collected the queue length traces used as input of the learning process (see Section~\ref{sec:learning}) by parsing the access logs generated by each component of the system. However, other monitoring solutions could be used, based for instance on recording the TCP backlog~\cite{ase17}. With this set-up, we were able to sample data with a measurement step $\Delta t = 0.01$\,s, which turned out to be sufficient for observing the transient dynamics of each component without altering the application behavior.  The replication package for this evaluation is publicly available at \url{https://zenodo.org/record/3679251}.

%

\emph{Model Learning:} We built the training dataset as a collection of queue length traces produced by the target application under 50 different initial population vectors where each station had a number of clients drawn uniformly at random between 0 and 30. For each such initial population vector the trace consisted of the average queue-length dynamics over 500 independent executions.

The target model of the learning process is reported in the right side of Figure~\ref{fig:casestudy}. In particular, components \node{C1}, \node{C2}, \node{C3} are modeled by queuing stations $M_2$, $M_3$, $M_4$, while both the workload generator \node{W} and the load balancer \node{LB} are abstracted by the same station $M_1$, since the delay introduced by \node{LB} is negligible with respect to the other components of the network. All the parameters of the resulting QN were considered parameters to be learned by the RNN. 

Similarly to the synthetic case, the collected traces were split into two halves for training and validation, respectively. We used Adam~\cite{Adam} as the learning algorithm with learning rate equal to $0.01$ and iterated until the error computed on the validation set did not improve at least $0.01\%$ in the last 50 iterations. With this, the system parameters were learned in 27 minutes on average, with a validation error of 3.89\%.

\emph{What if analysis:} In the following we evaluate the predictive power of the RNN learned QN under an unseen number of clients, concurrency levels and routing probabilities. Differently from the synthetic case, here we emulate a concrete usage scenario in which an initially hidden performance bottleneck is discovered and solved only relying on the insights given by the learned model. For doing so, we exercised both the QN model and the real system under an increasing number of clients (here each simulation averaged over 300 simulation runs instead of 500 since for evaluating the what-if analysis less runs are needed) by a factor $k = 2, \ldots, 5$ with respect to an initial population which had 26 circulating clients. Figure~\ref{fig:bottleneck} reports the numerical results of this evaluation, showing a trend that induces a saturation condition in station $M_3$. Overall, the prediction error of the RNN is less than 10\% across all instances. 

In Figure~\ref{figure:solvebottleneck} we report two different strategies that can be used in order to remove the bottleneck: we reevaluated both the learned model and the real system starting from the case $k =4$ (see  Figure~\ref{fig:bottleneck4}), varying either the number of servers or the load-balancing weights/routing probabilities. Figure~\ref{figure:solvebottlenecka} shows the dynamics of the system when the number of servers of $M_3$ is increased from 5 to 8, Figure~\ref{figure:solvebottleneckb} reports the what-if scenario in which we change the load distribution strategy  from a uniform  probability distribution to one where stations $M_2$, $M_3$, and $M_4$ are targeted with probability $0.35$, $0.20$ and $0.45$, respectively. Consistently with the intuition, both what-if instances show a lighter pressure (i.e., smaller queue length) at $M_3$. Furthermore, both situations are well predicted by the RNN, yielding an accuracy error of ca. 6\% with respect to the real system dynamics.


\begin{figure*}
	\centering
	\begin{subfigure}{0.40\linewidth}
		\centering
		\includegraphics[width=\linewidth]{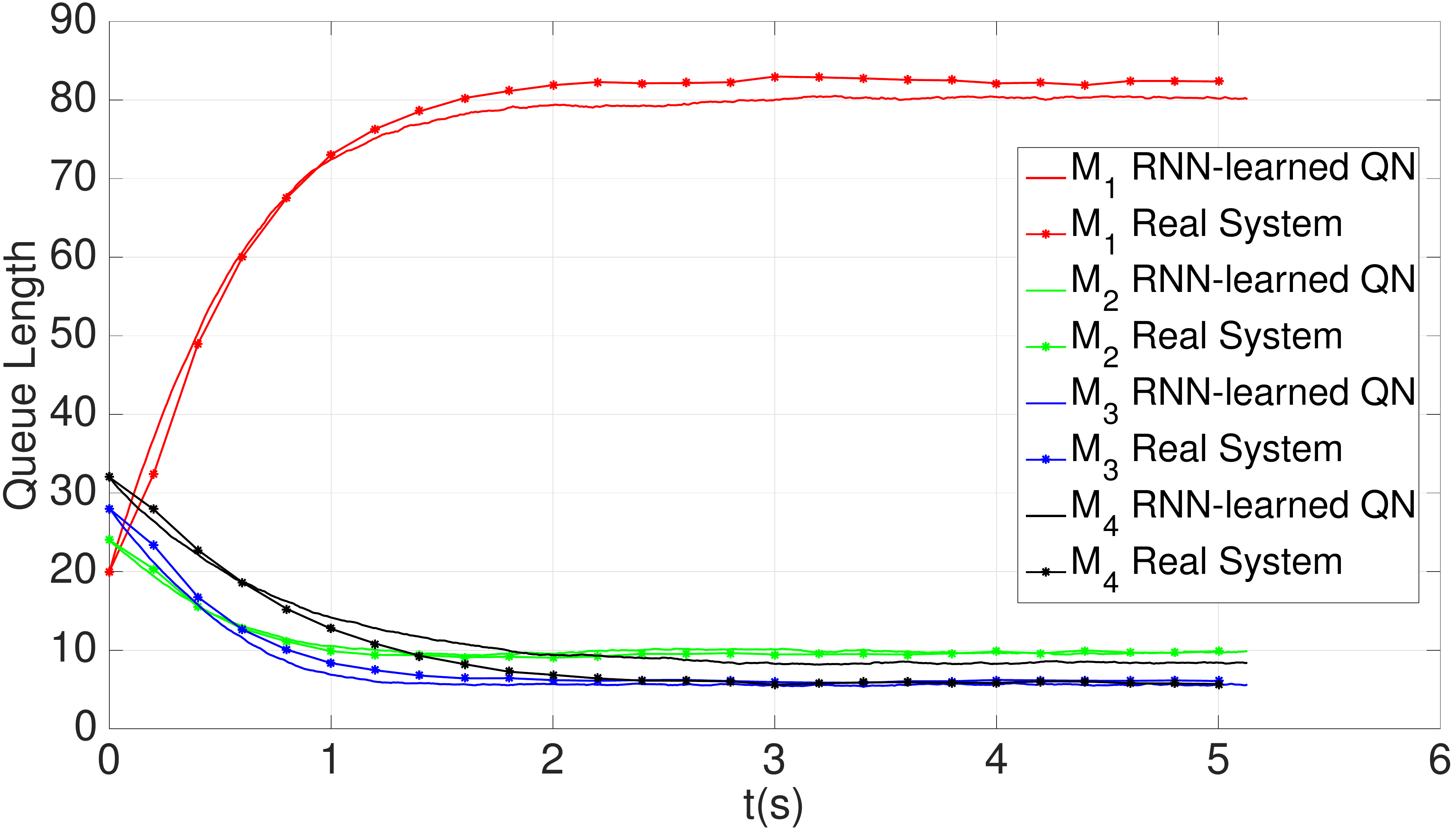}
		\caption{err=5.98\%}
		\label{figure:solvebottlenecka}
	\end{subfigure}
	\qquad \qquad 
	\begin{subfigure}{0.40\linewidth}
		\centering
		\includegraphics[width=\linewidth]{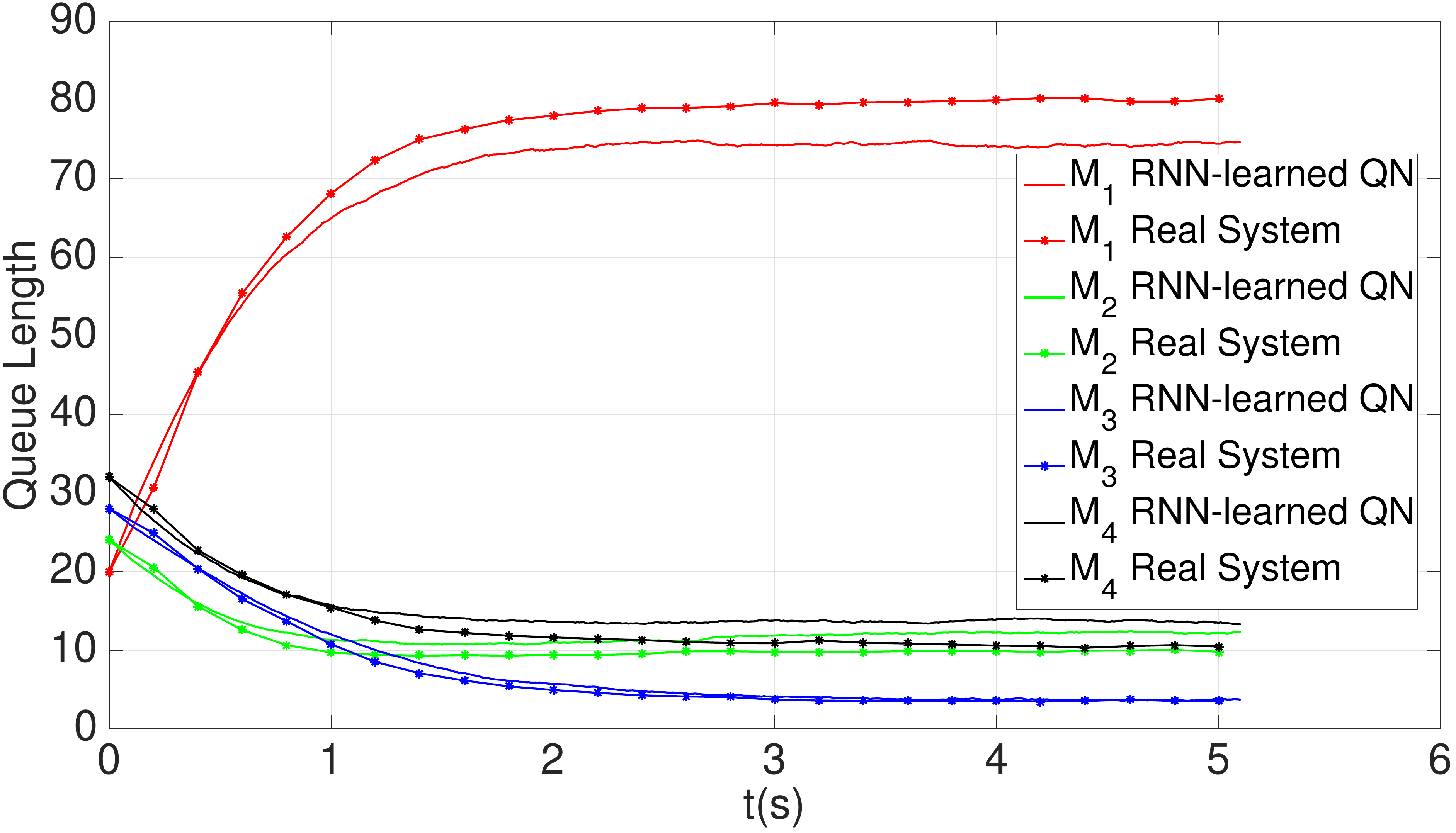}
		\caption{err=6.10\%}
		\label{figure:solvebottleneckb}
	\end{subfigure}
	\caption{a) What-if scenario changing the concurrency level of $M_3$ from 5 to 8. b) What-if scenario changing the load balancing strategy from a uniform probability distribution to the case where stations $M_2$, $M_3$, and $M_4$ are reached with probabilities $0.35$, $0.20$ and $0.45$, respectively. Both scenarios have been evaluated on the real case study and the RNN-learned QN with an initial population vector with $k = 4$ from  Figure~\ref{fig:bottleneck4}.}
	\label{figure:solvebottleneck}
\end{figure*}


\section{Related work}\label{sec:related}

In this section we relate against techniques related to the following lines of research: performance prediction from programs, generation of performance models from programs, and estimation of parameters in QNs. 

\subsection{Program-driven Performance Prediction}\label{sec:rel:pred}

A line of work focuses on the derivation of performance predictions from code analysis. \textsl{PerfPlotter} uses program analysis (specifically, probabilistic symbolic execution~\cite{Geldenhuys:2012:PSE:2338965.2336773}) to generate a \emph{performance distribution}, i.e., the probability distribution function of a performance metric such as response time~\cite{chen2016generating}. Thus, the result of the overall analysis is a quantitative model but it is not predictive. Furthermore, the approach applies to single-threaded applications, hence important performance-influencing sources such as threads contention cannot be captured. 

Other related approaches consist in predicting performance models using \emph{black-box} methods. They are particularly relevant for variability-intensive systems, where they relate configuration settings in a software system with their performance impact~\cite{Siegmund:2012:PPV:2337223.2337243,Siegmund:2015:PMH:2786805.2786845}. Machine-learning techniques have been used also in this case to build the predictive model~\cite{6693089,Siegmund:2015:PMH:2786805.2786845,DBLP:conf/wosp/ValovPGFC17,DBLP:conf/sigsoft/JamshidiVKS18}.
For instance, in~\cite{Siegmund:2012:PPV:2337223.2337243} the system model is assumed to be a linear combination of binary variables (e.g., tree structured models), each of them denoting the presence or the absence of a feature. Then the performance model is computed by means of linear regression over pairs of configurations and measured performance indices. The influence of possible feature interactions is embedded in the model by introducing fresh variables so as to preserve the linear structure of the model.  As discussed in~\cite{DBLP:conf/wosp/ValovPGFC17}, these black-box approaches can be seen as complementary to ours, which can provide a reliable mathematical abstraction by which performance can be explicitly associated to software components, thus increasing the explanatory power of the prediction. 

\subsection{Program-driven Generation of Performance Models}

While model-driven approaches to software performance have been  researched quite intensively~\cite{DBLP:books/daglib/0027475}, program-driven generation of performance models has been less explored, and has been concerned with specific kinds of applications. Indeed, the early approach by Hrischuk et al. is concerned with the generation of software performance models (specifically, layered queuing networks~\cite{10.1109/TSE.2008.74}) from a class of distributed applications whose components communicate solely by remote procedure calls~\cite{hrischuk1995automatic}. Brosig et al. derive a component-based performance model from applications running on the Java EE platform~\cite{Brosig:2009:AEP:1698822.1698835,brosig2011automated}. Tarvo and Reiss develop a technique for the extraction of discrete-event simulation models from a class of multi-threaded programs covering task-oriented applications, whereby the business logic consists in assigning a given workload (i.e., a task) to a number of worker threads from a pool~\cite{DBLP:conf/kbse/TarvoR14}. Their use of a simulation model as opposed to an analytical model is justified by the difficulty in building the latter, especially to model such diverse performance-related phenomena as queuing effects, inter-thread synchronization, and hardware contention. This is indeed the limitation that we aim to overcome with our approach, by building the analytical model automatically from measurements.

\subsection{Estimation of service demands in queuing networks}\label{sec:rel:qn}

Most of the literature concerning the estimation of QN parameters focuses on service demands. In particular, it considers the situation when the system is in the steady-state, i.e., when a sufficiently large amount of time has passed such that its behavior does not depend on the initial conditions~\cite{spinner2015evaluating}. Mathematically, the assumption of a steady-state regime enables the leveraging of a wealth of analytical results for QNs~\cite{bolch}. Based on these are several estimation methods using techniques such as linear regression~\cite{pacifici2008cpu}, quadratic programming~\cite{mascots18}, non-linear optimization~\cite{menasce2008computing,menasceIcpe17}, clustering regression~\cite{cremonesi2010service}, independent component analysis~\cite{sharma2008automatic}, pattern matching~\cite{cremonesi2014indirect}, Gibbs sampling~\cite{wang2013bayesian, sutton2011bayesian}, and maximum likelihood~\cite{wang2016maximum}. 

The main advancement of our approach with respect to the state of the art is the ability to learn the whole model, i.e., both the service demands and the QN topology (via the routing probabilities). In addition, since it uses an ODE representation, it does not make assumptions about the stationarity of the system; indeed, we do train our RNN using traces that include the transient dynamics. Actually, our approach uses the same QN model as the service-demand estimation method recently proposed in~\cite{mascots18}, which is also based on fluid approximation. 

Another difference with practical implications regards the type of data using for the estimation.  Approaches such as~\cite{liu2006parameter,cremonesi2010service,sharma2008automatic,kalbasi2011mode,cremonesi2014indirect} require measurements of quantities that may be difficult to obtain. For example, utilization metrics may not be available to the user when there is no complete information about the underlying hardware stack, for instance in a virtualized system running on a Platform-as-a-Service environment. Instead, measuring queue-length samples only has been regarded as more advantageous~\cite{wang2016maximum,mascots18}, since this information can be often obtained from application logs or by means of operating system calls.

\section{Conclusions}\label{sec:conclusion}

We presented a novel methodology for learning queuing network (QN) models of software systems. The main novelty lies in the encoding of the QN as an explainable recurrent neural network where inputs and weights are associated to standard queuing network inputs and parameters. 
We reported promising results on synthetic examples and on a real case study, where the maximum discrepancy between the dynamics predicted by the learned models and those computed through the ground truth is less than the $10$\% when the system is evaluated under unseen configurations  that are not included in the training set. We plan to extend our technique for capturing more complex models and systems, such as mixed multi-class and layered QNs, and to explore other learning methodologies such as neural ODEs~\cite{chen2018neural} and residual networks~\cite{zagoruyko2016wide}. Moreover, in order to improve the accuracy of the learned models and to reduce the simulation time, we plan to investigate active learning techniques that enable an informed sampling of the initial conditions~\cite{kaltenecker2019distance}.

\bigskip
\acks{This work has been partially supported by the PRIN project ``SEDUCE'' no. 2017TWRCNB.}


\appendix
\section{Appendix}
\label{app:not_univoque_proof}
\begin{proof}[Proof of Theorem \ref{not_univoque}]
We construct $\hat{\Pt}$ and $\hat{\rate}$ as follows:
\begin{align*}
	\hat{\Pt}_{k,i} & = 
	\begin{cases}
		\pi_k &\mbox{ if $i = k$}\\
		\frac{\Pt_{k,i}}{1-\Pt_{k,k}}(1-\pi_k) &\mbox{ if $\Pt_{k,k} < 1$ and $i≠k$}\\
		\frac{1-\pi_k}{M-1} &\mbox{ otherwise}
	\end{cases} \\
	\hat{\rate}_k & = 
	\begin{cases}
		\frac{\Pt_{k,k}-1}{\pi_k-1}\rate_k &\mbox{ if $\Pt_{k,k} < 1$}\\
		0 &\mbox{ otherwise}
	\end{cases}
\end{align*}

We prove that, for each $i≠k$ we have $\hat{\Pt}_{k,i}\hat{\rate}_k = \Pt_{k,i}\rate_k$ and $(\hat{\Pt}_{k,k}-1)\hat{\rate}_k = (\Pt_{k,k}-1)\rate_k$. Then (a) follows by substitution.

We now consider the case $\Pt_{k,k} < 1$.
\begin{align*}
\hat{\Pt}_{k,i}\hat{\rate}_k &= \frac{\Pt_{k,i}}{1-\Pt_{k,k}}(1-\pi_k) \frac{\Pt_{k,k}-1}{\pi_k-1}\rate_k\\
&= \frac{\Pt_{k,i}}{\Pt_{k,k}-1}(\pi_k-1) \frac{\Pt_{k,k}-1}{\pi_k-1}\rate_k\\
&= \Pt_{k,i}\rate_k.\\
(\hat{\Pt}_{k,k}-1)\hat{\rate}_k &= (\pi_k-1) \frac{\Pt_{k,k}-1}{\pi_k-1}\rate_k\\
&= (\Pt_{k,k}-1)\rate_k.
\end{align*}
We now consider the case $\Pt_{k,k} = 1$. We remark that, in this case, $\Pt_{k,i} = 0$ if $i≠k$.
\begin{align*}
\hat{\Pt}_{k,i}\hat{\rate}_k 
& = \frac{1-\pi_k}{M-1} 0 = 0 = 0 \rate_k = \Pt_{k,i} \rate_k . \\
(\hat{\Pt}_{k,k}-1)\hat{\rate}_k &= (\pi_k-1) 0 = 0 = 0 \rate_k = (\Pt_{k,k}-1) \rate_k .
\end{align*}

The point (b) is true by definition of $\hat{\Pt}$. Statement (c) can be shown as follows. When $\Pt_{k,k} < 1$:
\begin{align*}
\sum_i \hat{\Pt}_{k,i} &= \hat{\Pt}_{k,k} + \sum_{i≠k} \hat{\Pt}_{k,i}\\
&= \pi_k + \sum_{i≠k}\frac{\Pt_{k,i}}{1-\Pt_{k,k}}(1-\pi_k)\\
&= \pi_k + 1-\pi_k = 1
\end{align*}
where the last statement follows because $\sum_i \Pt_{k,i}=1$, $\sum_{i \neq k} \Pt_{k,i} = 1-\Pt_{k,k}$.
When $\Pt_{k,k} = 1$:
\begin{align*}
\sum_i \hat{\Pt}_{k,i} &= \hat{\Pt}_{k,k} + \sum_{i≠k} \hat{\Pt}_{k,i}\\
&= \pi_k + \sum_{i≠k}\frac{1-\pi_k}{M-1}\\
&= \pi_k + \frac{M-1}{M-1}(1-\pi_k)\\
&= \pi_k + 1-\pi_k = 1
\end{align*}
Statement (d) can be shown observing that $0 ≤ \pi_k < 1$, $1-\Pt_{k,k} ≥ 0$ (since $\Pt_{k,k}≤1$) and $1-\pi_k > 0$.
Statement (e) can be shown observing that $\rate_k ≥ 0$, $\Pt_{k,k}-1 ≤ 0$ and $\pi_k-1 < 0$.
\end{proof}

\end{document}